\documentclass[a4paper,headings=standardclasses]{scrartcl}
\pdfoutput=1
\usepackage[utf8]{inputenc}

\usepackage{paper_preamble}

\title{An off-shell conformally invariant \\ Galilean Weyl tensor}

\author[1,a]{Quentin Vigneron}
\author[2,b]{Philip K. Schwartz}
\author[3,c]{James Read}

\affil[1]{\acronym{ENS} de Lyon, \acronym{CRAL} \acronym{UMR5574},
  Universit\'e Claude Bernard Lyon 1, \acronym{CNRS}, \par
  Lyon 69007, France}
\affil[2]{Institute of Theoretical Physics,
  Leibniz University Hannover, \par
  Appelstraße 2, 30167 Hannover, Germany}
\affil[3]{Faculty of Philosophy, University of Oxford, \par
  Oxford \acronym{OX2 6GG},  United Kingdom}
\affil[a]{\normalfont\texttt{\href{mailto:quentin.vigneron@ens-lyon.fr}
    {quentin.vigneron@ens-lyon.fr}}}
\affil[b]{\normalfont\texttt{\href{mailto:philip.schwartz@itp.uni-hannover.de}
    {philip.schwartz@itp.uni-hannover.de}}}
\affil[c]{\normalfont\texttt{\href{mailto:james.read@philosophy.ox.ac.uk}
    {james.read@philosophy.ox.ac.uk}}}

\date{}

\begin{document}

\maketitle

\begin{abstract}
We propose a manifestly conformally invariant off-shell definition of
the Weyl tensor in Galilean geometry.  We also propose definitions for
the electric and magnetic parts thereof.  For the latter to
vanish, observers with specific kinematical properties need to exist.
While this is guaranteed by the Newton--Cartan equation, it might not
be true for other Galilean invariant theories.  Therefore, we propose
that imposing the existence of observers for which the off-shell
magnetic part is zero should be a necessary condition for a Galilean
invariant theory to be called `Newtonian', in the spirit of the
`Newtonian' condition, introduced by~Trautman, in standard
Newton--Cartan gravity.

As a side result, we also show that there exists a unique Galilean
boost-invariant connection that can be built from a Galilean structure
and a choice of Coriolis field, even when the clock form is not
closed.  No extra structure is needed, contrasting with the standard
approach used to construct boost-invariant connections which
introduces a mass gauge field.
\end{abstract}

\newpage

\enlargethispage{\baselineskip}

\tableofcontents

\newpage

\section{Introduction}

In relativistic (Lorentzian) spacetime physics, a central object of
study is the Weyl tensor---i.e.,\ the trace-free part of the Riemann
tensor.  The are several (related) reasons for this: (i)~the Weyl
tensor is conformally invariant and thus carries important information
about the conformal structure of spacetime; (ii)~the Weyl tensor
encodes the gravitational degrees of freedom which are retained even
in vacuum solutions of general relativity---i.e.,\ even in the absence
of matter (when $R_{\mu\nu} = 0$); (iii)~the Weyl tensor is a crucial
object of study in the physics of gravitational waves; (iv)~the Weyl
tensor can be used to facilitate a classification of relativistic
spacetimes (the famous Petrov classification); and so forth.  All of
this is standard lore, and can be found in any good textbook treatment
of general relativity.

What has only recently begun to arouse interest, however, is how key
structures in relativistic physics (including, but not limited to, the
Weyl tensor) carry over to and behave in theories of spacetime and
gravitation other than general relativity.  For example, the question
how the notion of conformal structure carries over to the context of
Galilean gravity (i.e.,\ `non-relativistic' gravity---roughly, the $c
\rightarrow \infty$ limit of general relativity \cite{Cartan:1923_24_86,
  Friedrichs:1928, Trautman:1963, Trautman:1965,
  Dombrowski.Horneffer:1964, Kuenzle:1972, Kuenzle:1976, Ehlers:1981,
  Ehlers:2019, Malament:2012, Hartong.Obers.Oling:2023,
  Schwartz:2026}) and Carrollian gravity (i.e.,\
`ultra-relativistic' gravity---roughly, the $c \rightarrow 0$ limit of
general relativity \cite{Duval.EtAl:2014, Bergshoeff.Gomis.Longhi:2014,
  Bergshoeff.EtAl:2017, March.Read:2025}) has been picked up in
refs.~\cite{March.Read:2026, Schwartz.Read.Vigneron:2026}, and
sensible definitions of these non-Lorentzian conformal structures have
been proposed therein.

One might think that exploring the status of such structures in
spacetime geometries other than the Lorentzian setting is a mere
academic exercise, since (of course) our acutal universe is
relativistic.  However, to say this would be too fast: on the one
hand, Galilean gravity has recently found wide application in \textit{inter alia}
condensed matter phenomena (especially the fractional quantum Hall
effect, see refs.~\cite{Son:2025, Geracie.Prabhu.Roberts:2016, Wolf.Read.Teh:2022}), Ho{\v r}ava-Lifshitz gravity
\cite{Hartong.Obers:2015}, non-relativistic holography \cite{Christensen.EtAl:2014}, and non-relativistic
string theory \cite{Harmark.Hartong.Obers:2017}; on the other hand, Carrollian gravity has
recently become \emph{à la mode} in fields such as flat-space holography \cite{Bagchi.EtAl:2024}
and the physics of black holes (see, e.g., ref.~\cite{Fiorucci.EtAl:2025})---the latter
often via Carrollian string theory, on which see, e.g., ref.~\cite{Cardona.Gomis.Pons:2016}.
Given the recent wide application of these non-relativistic
geometries, there is a good case to be made that the project of
seeking to understand the structures and invariants related to them is
a worthy one.

As such, in this article we pick up on the notion of conformal
structure in the Galilean context (the Carrollian context being set
aside for another day), and in particular explore the Weyl tensor---\emph{qua}
conformal invariant---in that setting.  Although there is already
some writing on this topic---see in particular refs.~\cite{Ehlers.Buchert:2009, Dewar.Weatherall:2018,
  Dewar.Read:2020}---that work sometimes involves mathematical errors,
and in any case is not maximally general insofar as the results obtain
only on-shell---i.e., when the equations of motion are imposed.

With this in mind, our plan for the current paper is as follows.  In
\Cref{sec:Lor_Weyl}, we recall the definitions of the Weyl tensor for Lorentzian
structures, and its electric and magnetic parts.  In \Cref{sec:Gal}, we discuss
Galilean structures, their conformal transformations, and what affine
connections should be considered given such a structure.  \Cref{sec:Galilean_limit} presents
the Galilean limit of Lorentzian geometry.  Then, applying this limit
to the Lorentzian Weyl tensor, along with the material introduced in
\Cref{sec:Gal}, we provide in \Cref{sec:Gal_Weyl_conf} a conformally invariant off-shell definition
for the Galilean Weyl tensor and its electric and magnetic parts.  We
discuss the significance of these definitions in \Cref{sec:disc}, and we conclude
in \Cref{sec:concl}.

\subsection{Notations}
\label{sec:definitions}

We will employ the following notation:
\begin{itemize}
\item The coordinate components of a connection are given by $\nabla_\mu v^\nu =:
  \partial_\mu v^\nu + \Gamma^\nu_{\mu\sigma} v^\sigma$.

\item Unit timelike vector fields with respect to a Lorentzian structure
  will be denoted by uppercase letters (e.g., $U^\mu$); unit timelike
  vector fields with respect to a Galilean structure will be denoted
  by lowercase letters (e.g., $u^\mu$).  Some tensors will depend on the
  choice of timelike vector field, e.g., the Coriolis field
  $\ukappa{}_{\mu\nu}$.  These tensors should feature an upper-left index
  `$u$' indicating this dependence.  However, unless an ambiguity can
  arise, we will drop this index after the introduction of the tensor.

\item The speed of light is denoted by $c$ and is not assumed to be unity,
  in order to properly perform the Galilean limit.\footnote{We use the
    terminology `Galilean' rather than `Newtonian' for denoting the $c
    \to \infty$ limit of Lorentzian geometry to highlight the fact
    that we are in a geometric (off-shell) situation.}  We write $\cc
  := 1/c^2$.

\item In the context of the Galilean limit, we will write `\acronym{LO}' for
  `leading order'.

\item The symmetric traceless part of a two-tensor field will be denoted
  by enclosing the corresponding two indices in angular brackets.  The
  tracelessness is defined with respect to that non-degenerate metric
  which is clear from context, which in our case will be the spatial
  metric from either the Lorentzian or the Galilean structures.  For
  example, in three dimensions, denoting the metric by $h_{ij}$, we
  write $A_{\langle ij\rangle} := A_{(ij)} - \tfrac{1}{3} h^{kl} A_{kl} h_{ij}$.
\end{itemize}

\section{The Weyl tensor of Lorentzian structures}
\label{sec:Lor_Weyl}

In this section we recall, for Lorentzian structures, the definitions
of the Weyl tensor, its electric and magnetic parts, the
1+3 formalism, and conformal transformations.

\paragraph{The Weyl tensor:}

Given a Lorentzian metric $g_{\mu\nu}$, its Weyl tensor is defined as
\begin{equation}
    C^\alpha{}_{\beta\mu\nu}
        := R^\alpha{}_{\beta\mu\nu}
            - \delta^\alpha_{[\mu} R^{\vphantom{\alpha}}_{\nu]\beta}
            + g_{\beta[\mu} R_{\nu]\sigma} g^{\sigma\alpha}
            + \frac{1}{3} \delta^\alpha_{[\mu} g^{\vphantom{\alpha}}_{\nu]\beta} R_{\sigma\lambda} g^{\sigma\lambda}\,.
\end{equation}
Given a unit timelike vector field $U^\mu$, i.e., $U_\sigma U^\sigma = -1/\cc$,
and its orthogonal projector $\ubg{U}_{\mu\nu} := g_{\mu\nu} + \cc U_\mu
U_\nu$, the electric part $\uE{U\!}_{\mu\nu}$ and magnetic part
$\uH{U\!}_{\mu\nu}$ of the Weyl tensor with respect to $U^\mu$ are defined as
\begin{align}
    \label{eq:def_E}
    \uE{U\!}_{\mu\nu}
        &:= C_{\mu\alpha\nu\beta} U^\alpha U^\beta \nonumber \\
        &\:= \frac{1}{\cc} \frac{1}{2} \ubg{}^\beta{}_{\mu} \ubg{}^\alpha{}_{\nu}
            \left(R_{\alpha\beta} - \frac{1}{3} \ubg{}_{\alpha\beta} R_{\sigma\omega} g^{\sigma\omega} \right)
            -\frac{1}{2} \ubg{}_{\mu\nu} \, R_{\alpha\beta} U^\alpha U^\beta
            + R_{\mu\alpha\nu\beta} U^\alpha U^\beta \; , \\
    \label{eq:def_H}
    \uH{U\!}_{\mu\nu}
        &:= \tfrac{1}{2} \iR{U}{\epsilon}_{\mu\alpha\beta} C^{\alpha\beta}{}_{\nu\sigma} U^\sigma \nonumber \\
        &\:= \tfrac{1}{2} \iR{U}{\epsilon}_{\alpha\beta (\mu} R^{\alpha\beta}{}_{\nu)\sigma} U^\sigma \; ,
\end{align}
where $\iR{U}{\epsilon}_{\beta\mu\nu} := \epsilon_{\alpha\beta\mu\nu} U^\alpha$ and $\epsilon_{\alpha\beta\mu\nu}$ is the
spacetime volume form associated to $g_{\mu\nu}$.  The Weyl tensor can be
expressed in terms of the electric and magnetic parts as
\begin{equation}
    \label{eq:Weyl_E_H}
    C_{\mu\nu}{}^{\alpha\beta}
        = \cc 4 \left(\ubg{}^{[\alpha}{}_{[\mu} \! \uE{}^{\beta]}{}_{\nu]}
            + \cc U^{[\alpha} U_{[\mu} \! \uE{}^{\beta]}{}_{\nu]} \right)
            + \cc 2 \iR{}{\epsilon}_{\mu\nu\sigma} U^{[\alpha} \uH{}^{\beta] \sigma}
            + \cc 2 U_{[\mu} \! \uH{}_{\nu]\sigma} \iR{}{\epsilon}^{\alpha\beta\sigma} \; .
\end{equation}

\paragraph{The 1+3 formalism:}

Using the 1+3 formalism, it is possible to express the electric and
magnetic parts in terms of quantities purely orthogonal\footnote{Since
  we may freely identify the tangent and cotangent spaces via the
  non-degenerate Lorentzian metric $g_{\mu\nu}$, the notion of being
  orthogonal to $U^\mu$ is well-defined for both vectors and
  covectors, and hence for (indices of) both contra- and covariant
  tensors.} to $U^\mu$, called \emph{kinematical quantities}.  We describe this approach
in the following.  For a complete introduction on the 1+3 formalism,
we direct the reader to ref.~\cite{Roy:2014}.

The kinematical (1+3) quantities of $U^\mu$ are defined as follows:
\begin{align}
    \text{the acceleration:}  &&\ua{U}^\mu
    &:= U^\alpha \nabla_\alpha U^\mu \; , \\
    \text{the expansion tensor:}  &&\uTheta{U\!}_{\mu\nu}
    &:= \ubg{}^\alpha{}_{(\mu} \ubg{}^\beta{}_{\nu)} \nabla_\alpha U_\beta
    = \frac{1}{2} \Lie{U} \ubg{}_{\mu\nu} \; , \\
    \text{the vorticity tensor:}  &&\uOmega{U\!}_{\mu\nu}
    &:= \ubg{}^\alpha{}_{[\mu} \ubg{}^\beta{}_{\nu]} \nabla_\alpha U_\beta \; , \\
    \text{the expansion rate:}  &&\utheta{U\!}
    &:= \uTheta{}^\alpha{}_\alpha \; , \\
    \text{the shear tensor:}  &&\usigma{U\!}_{\mu\nu}
    &:= \uTheta{}_{\langle\mu\nu\rangle} \; ,
\end{align}
where ${\langle\mu\nu\rangle}$ denotes the traceless part with respect to the spatial
projector $\ubg{}_{\mu\nu}$, i.e., $\uTheta{}_{\langle\mu\nu\rangle} := \uTheta{}_{\mu\nu} -
\tfrac{1}{3} \utheta{} \ubg{}_{\mu\nu}$.

We define the `spatial connection' $\iR{U}{D}_\mu$ acting on tensor fields
$A^{\mu \dots}{}_{\nu \dots}$ orthogonal to $U^\mu$ as the operator
\begin{align}
    \label{eq:def_D_spatial}
    \iR{U}{D}_\alpha A^{\mu \dots}{}_{\nu \dots}
        &:= \ubg{}^\omega{}_\alpha\, \ubg{}^\mu{}_\sigma \, \dots \, \ubg{}_\nu{}^\lambda \, \dots \,
    \nabla_\omega A^{\sigma \dots}{}_{\lambda \dots} \; .
\end{align}
Following Section~5.1.1 in ref.~\cite{Roy:2014}, we define the `spatial Riemann
tensor' $\iR{U\,}{\CR}^\alpha{}_{\beta\mu\nu}$ as
\begin{align}
  \omega_\alpha \, w^\beta \, v_1 ^\mu \, v_2^\nu \, \iR{U\,}{\CR}^\alpha{}_{\beta\mu\nu}
    := \omega_\alpha \left[\left(D_{v_1} D_{v_2} w - D_{v_2} D_{v_3} w \right)^\alpha
    - \ubg{}^\alpha{}_\sigma (\nabla_{[v_1,v_2]} w)^\sigma \right]
     ,
\end{align}
where $\omega^\mu$, $w^\mu$, $v_1^\mu$, and $v_2^\mu$ are all orthogonal to $u^\mu$.

\begin{remark}
  \label{rem:D_spatial}
  If the vorticity vanishes, i.e., $U_\mu$ is foliation forming, then
  $D_\mu$ is the Levi-Civita connection of the spatial Riemannian metric
  induced by $g_{\mu\nu}$ on each of the spatial leaves integrating the
  distribution $\ker U_\mu \;$.

  If the vorticity does \emph{not} vanish, $\Omega_{\mu\nu} \ne 0$, the operator $D_\mu$
  may \emph{not} be understood as a vector bundle connection (covariant
  derivative operator): it does not satisfy the Leibniz rule for
  differentiation in direction of $U^\mu$ (since this yields $0$ by
  definition), so it cannot be a connection on the vector bundle $\ker
  U_\mu$ over the spacetime manifold; further, since no spatial leaves
  exist, it cannot be understood as an affine connection on these
  (i.e., as a covariant derivative on the leaves' tangent bundles).
  In particular, it cannot be understood as the Levi-Civita connection
  of a `Riemannian spatial metric'.  In particular, its `torsion',
  defined by
  \begin{equation}
    T[D]^\alpha{}_{\mu\nu} \, \omega_\alpha \, v_1 ^\mu \, v_2^\nu
    := \omega_\alpha \left[\left(D_{v_1} w - D_{v_2} w\right)^\alpha - [v,w]^\alpha
    \right]
  \end{equation}
  for $\omega^\mu$, $v_1^\mu$, $v_2^\mu$ orthogonal to $u^\mu$ (see Section~4.2.3
  in ref.~\cite{Roy:2014}), is not purely spatial and given by
  \begin{equation}
    T[D]^\alpha{}_{\mu\nu}
    = -\cc\, 2 \, \Omega_{\mu\nu} U^\alpha\;.
  \end{equation}
  In particular, the presence of this torsion leads to an
  antisymmetric part of the `spatial Ricci tensor' (defined by
  contraction of $\CR^\alpha{}_{\beta\mu\nu}$):
  \begin{equation}
    \CR_{[\mu\nu]} = \cc \left[ \theta \, \Omega_{\mu\nu} + 2 \Theta^\alpha{}_{[\mu} \Omega_{\nu]\alpha} \right] .
  \end{equation}
  Therefore, this tensor does not enjoy the properties of the Ricci
  tensor defined from a Riemannian metric.

  However, as will be shown in \Cref{sec:limit_kin_vars}, at leading order in the Galilean
  limit, $D_\mu$ corresponds to the Levi-Civita connection of the
  Galilean spatial (Riemannian) metric, and the Ricci tensor is
  symmetric.
\end{remark}

With respect to the kinematical variables, the electric part of the
Weyl tensor takes the form (see Section~4.7.2 in ref.~\cite{Ellis.Maartens.MacCallum:2012})
\begin{equation}
    \label{eq:E_kin_vars_old}
    E_{\mu\nu}
        = \frac{1}{\lambda} \frac{1}{2} R_{\alpha\beta}  \ubg{}^{\alpha}{}_{\langle\mu} \ubg{}^\beta{}_{\nu\rangle}
        - \Lie{U} \sigma_{\langle \mu\nu \rangle}
        + \sigma_{\alpha\langle\mu} \sigma_{\nu\rangle}{}^\alpha
        + \Omega_{\alpha\langle\mu} \Omega_{\nu\rangle}{}^\alpha
        + 2 \sigma_{\alpha\langle\mu} \Omega_{\nu\rangle}{}^\alpha
        + D_{\langle\mu} a_{\nu\rangle}
        + \cc a_{\langle \mu} a_{\nu \rangle}\; .
\end{equation}
We want to express the first term in that formula in terms of the
spatial curvature.  By Corollary S3 in ref.~\cite{Roy:2014}, we have
\begin{equation}
    \frac{1}{\lambda} R_{\alpha\beta} \ubg{}^{\alpha}{}_{\langle\mu} \ubg{}^\beta{}_{\nu\rangle}
    = \CR_{\langle\mu\nu\rangle}
        + \Lie{U} \sigma_{\langle\mu\nu\rangle}
        + \frac{1}{3} \theta \sigma_{\mu\nu}
        - 2 \sigma^\alpha{}_{\langle\mu} \sigma_{\nu\rangle\alpha}
        - 2  \sigma_{\alpha\langle\mu}\Omega_{\nu\rangle}{}^\alpha
        - D_{\langle\mu} a_{\nu\rangle}
        - \cc a_{\langle\mu} a_{\nu\rangle} \; .
\end{equation}
Then, the electric part~\eqref{eq:E_kin_vars_old} can be expressed in terms of the spatial
curvature as
\begin{equation}
  \label{eq:E_kin_vars}
  E_{\mu\nu}
    = \frac{1}{2}\left[\frac{1}{\cc} \CR_{\langle\mu\nu\rangle}
      - \Lie{U} \sigma_{\langle \mu\nu \rangle}
      + \frac{1}{3} \theta \sigma_{\mu\nu}
      + 2 \left(\Theta_{\alpha\langle\mu} + \Omega_{\alpha\langle\mu}\right) \Omega_{\nu\rangle}{}^\alpha
      + D_{\langle\mu} a_{\nu\rangle}
      + \cc \, a_{\langle \mu} a_{\nu \rangle}
    \right] .
\end{equation}
The magnetic part is (see Section~4.7.2 in ref.~\cite{Ellis.Maartens.MacCallum:2012})
\begin{equation}
  \label{eq:H_kin_vars}
  H_{\mu\nu}
    = \epsilon_{\alpha\beta\langle\mu}
    \left[
      D^\alpha \left(\sigma^\beta{}_{\nu\rangle}
        + \Omega^\beta{}_{\nu\rangle}\right)
      - \cc \, a_{\mu\rangle} \Omega^{\alpha \beta}
    \right] .
\end{equation}

\paragraph{Conformal transformations:}

A conformal transformation of a Lorentzian metric $g_{\mu\nu}$ is defined
as
\begin{equation}
  g_{\mu\nu} \overset{\varphi}{\rightarrow}  \e^{2\varphi} g_{\mu\nu} \; ,
\end{equation}
where $\varphi$ is a scalar field.  Under this transformation, the 1+3
quantities transform as
\begin{subequations}
\label{eq:conf_Lor_kin_vars}
\begin{align}
  &\ U^\mu
    \overset{\varphi}{\rightarrow} \e^{-\varphi} \, U^\mu \; ,
  &&\: \: \theta
    \overset{\varphi}{\rightarrow} \e^{-\varphi} \left(\theta  + 3 U^\alpha \partial_\alpha \varphi \right) ,
  &&\sigma_{\mu\nu}
    \overset{\varphi}{\rightarrow} \e^{\varphi} \, \sigma_{\mu\nu} \; , \\
  &\Omega_{\mu\nu}
    \overset{\varphi}{\rightarrow} \e^{\varphi} \, \Omega_{\mu\nu} \; ,
  &&a_{\mu}
    \overset{\varphi}{\rightarrow} a_{\mu} + \frac{1}{\cc} \ubg{}^\alpha{}_\mu \partial_\alpha \varphi \; .
  &&
\end{align}
\end{subequations}
The Weyl tensor, its electric part, and its magnetic part are
conformally invariant.

\section{Galilean structures, connections, and conformal transformations}
\label{sec:Gal}

In this section we recall the definitions of Galilean structures,
their conformal transformations, and the decomposition theorem of any
affine connection with respect to these structures.  Then, we derive
the connections which can be constructed from a Galilean structure
alone.  For the basics of Galilean geometry, see, e.g., refs.~\cite{Malament:2012, Hartong.Obers.Oling:2023, Schwartz:2026}.

\subsection{Galilean structures}
\label{sec:Galilean_structures}

A \emph{Galilean structure} is a pair $(\tau_\mu, h^{\mu\nu})$ where $\tau_\mu$ is a
non-vanishing 1-form called the \emph{clock form}, and $h^{\mu\nu}$ is a
positive-semidefinite symmetric (2,0)-tensor field of rank 3,
satisfying $\tau_\mu h^{\mu\nu} = 0$, called the \emph{spatial metric}.

A vector (field) $v^\mu$ is said to be \emph{spacelike} if $\tau_\alpha v^\alpha = 0$, and
similarly for higher-degree contravariant tensors.  A vector (field)
$u^\mu$ is said to be \emph{unit timelike} with respect to the Galilean
structure if $\tau_\alpha u^\alpha = 1$, and (future-directed) \emph{timelike} if $\tau_\alpha u^\alpha
> 0$.  For the rest of the paper we will always consider unit timelike
vector fields, therefore we will drop `unit' for simplicity.

The field $h^{\mu\nu}$ induces a positive-definite bundle metric on the
distribution $\ker\tau_\mu$ of spacelike vectors, i.e., a proper `spatial
metric'; see, e.g., Construction~2.5 in ref.~\cite{Schwartz:2026}.

The \emph{spatial projector} along a timelike vector field $u^\mu$ is
\begin{equation}
  \ub{u}^\mu{}_\nu := \delta^\mu_\nu - u^\mu \tau_\nu \; .
\end{equation}
The \emph{covariant spatial metric} $\ub{u}_{\mu\nu} = \ub{u}_{\nu\mu}$ with respect
to $u^\mu$ is (uniquely) defined by
\begin{align}
  \ub{u}_{\nu\alpha} h^{\alpha\mu} = \ub{u}^\mu{}_\nu \; , \quad
  \ub{u}_{\mu\alpha} u^\alpha = 0 \; .
\end{align}
We use the letter $\ub{}$ rather than $\ubg{}$ to distinguish these
objects from the orthogonal projector defined for Lorentzian
structures.

While a Galilean structure alone does not define a dual relation
between forms and vectors, a standard convention is to define a
(non-invertible) way of raising and lowering indices via contraction
with the spatial metric $h^{\mu\nu}$ and the covariant spatial metric
$\ub{}_{\mu\nu}$, given a fixed choice of timelike vector field $u^\mu$.  In
this sense raising indices only depends on the Galilean structure,
while lowering indices depends on a choice of timelike vector field.
Since for most of this paper that vector field will be fixed, we can
raise and lower indices without ambiguity.

We introduce
\begin{equation}
    \omega_{\mu\nu} := \tfrac{1}{2} (\dd\tau)_{\mu\nu} = \partial_{[\mu} \tau_{\nu]} \; .
\end{equation}
For a general Galilean structure, there is no constraint on $\omega_{\mu\nu}$.
In particular, in general we have $\tau\wedge\dd\tau \ne 0$, implying that the
distribution $\ker\tau_\mu$ of spacelike vectors is not integrable.  For
the present paper we will however only consider clock forms satisfying\footnote{In the literature, this condition is sometimes called the condition of \emph{absolute simultaneity} or the \emph{twistless-torsional Newton--Cartan} condition.}
\begin{subequations}
\begin{equation}
  \tau \wedge \dd\tau = 0 \, ,
\end{equation}
such that the spacelike distribution is integrable.  We call the
foliation of the spacetime manifold integrating the spacelike
distribution $\ker\tau_\mu$ the \emph{spatial foliation}, or the foliation by \emph{spatial
  leaves}.  The bundle metric induced by $h^{\mu\nu}$ on $\ker\tau_\mu$
is then a (positive-definite) Riemannian metric on each spatial leaf
$\Sigma$, which we denote by $h_{ij}$ and call the \emph{spatial metric} as well
(in the following, no ambiguity will arise regarding whether we speak
of $h^{\mu\nu}$ or $h_{ij}$).  On top of complying with most of the
literature on Galilean structures, the hypothesis of $\tau_\mu$ being
foliation forming is also imposed by global hyperbolicity if we
require the Galilean structure to arise from a Galilean limit of a
Lorentzian metric (see \Cref{sec:ansatz_limit} and ref.~\cite{Vigneron.Barzegar.Read:2025}).  The condition $\tau \wedge \dd\tau =
0$ is equivalent to
\begin{equation}
  h^{\mu\alpha} h^{\nu\beta} \omega_{\alpha\beta} = 0 \, .
\end{equation}
Then, choosing any timelike vector field $u^\mu$, $\omega_{\alpha\beta}$ is fully
determined by the vector field
\begin{align}
  \uator{}^\mu
  &:= 2\,u^\alpha\omega_{\alpha\beta} h^{\beta\mu}
    = h^{\mu\alpha} \Lie{u} \tau_\alpha \; , \\
  \shortintertext{with}
  \omega_{\mu\nu}
  &= \tau_{[\mu} \ator_{\nu]}
    := \tau_{[\mu} \ub{}_{\nu]\alpha} \ator^{\alpha} \, .
\end{align}
\end{subequations}
Both $\ator^\mu$ and $\tau_{[\mu} \ator_{\nu]}$ do not depend on the choice of
timelike vector field.

The vector field $\ator^\mu$ is sometimes called the `torsion vector'
(see, e.g., ref.~\cite{Hansen.Hartong.Obers:2020}).\footnote{In \cite{Hansen.Hartong.Obers:2020}, the convention for unit timelike vector fields is sign-opposite to ours, i.e., $\tau_\alpha u^\alpha = -1$.  Therefore, their definition of the torsion form is opposite to ours.} It is indeed directly related to the
torsion of a connection compatible with the Galilean structure
(see~\eqref{eq:Gal_Q_T}).  However, one could consider a torsion-free non-metric
connection even in the case $\ator^\mu \ne 0$.  In this sense, it could
be also be called the `non-metricity vector'.  But because, regardless
of the choice of connection, $\ator^\mu$ encodes time dilation, we think
that a more appropriate name is the \emph{time dilation vector field}.

Finally, assuming the spacetime manifold is orientable and fixing an
orientation, we can define the \emph{spacetime volume form} of a Galilean
structure.  It may be defined geometrically as the unique 4-form $\eta$
that takes the value $1$ when evaluated on any positively oriented
\emph{Galilean basis} $(\e_t, \e_1, \e_2, \e_3)$, i.e., a positively oriented
basis of the tangent space at $p$ that is adapted to the Galilean
structure in the sense of satisfying $\tau(\e_t) = 1$ and $h|_p =
\sum_{a=1}^3 \e_a \otimes \e_a$.  In coordinate component notation, in
positively oriented coordinates adapted to the spatial foliation, the
spacetime volume form is given by
\begin{align}
  \eta_{\alpha\beta\mu\nu}
  := |\tau_0| \left(\det\left(h^{ij}\right)\right)^{-1/2} \varepsilon_{\alpha\beta\mu\nu} \; ,
\end{align}
where $\varepsilon_{\alpha\beta\mu\nu}$ is the antisymmetric Levi-Civita symbol, and
$\det\left(h^{ij}\right)$ is the determinant of the matrix of spatial
components of $h^{\mu\nu}$.  Given a choice of timelike vector field
$u^\mu$, the \emph{spatial volume form} $\ueta{u}_{\alpha\mu\nu}$, which is annihilated
by that vector field, is defined as
\begin{align}
  \ueta{u}_{\alpha\mu\nu} := u^\beta \, \eta_{\beta\alpha\mu\nu} \; .
\end{align}

\begin{remark}\label{rem:proj}
  Any spacelike tensor can be pulled back uniquely to the spatial
  leaves.  We call these leaves $\Sigma$, with the spacetime manifold
  taking the product form $\mR\times\Sigma$.  Equivalently, for any choice of
  timelike vector field $u^\mu$, any tensor annihilated by it can be
  pulled back uniquely on $\Sigma$ (see ref.~\cite{Vigneron:2021}).  In other words,
  spacelike tensors or tensors annihilated by $u^\mu$ are fully
  determined by their spatial components in a basis adapted to the
  spatial foliation.  These spatial components define a unique tensor
  on each $\Sigma$.  We will use this property to express the electric and
  magnetic Galilean Weyl tensors on $\Sigma$ in \Cref{sec:Weyl_spatial_approach}.
\end{remark}

\subsection{Galilean conformal structures and connections}
\label{sec:Gal_con_conf}

In this section, we define Galilean conformal transformations,
conformal Galilean structures, and their compatible connections (see
ref.~\cite{Schwartz.Read.Vigneron:2026} and references therein for a detailed discussion of these
definitions).  These definitions will be important when assessing the
conformal invariance of the Galilean Weyl tensor in \Cref{sec:Gal_Weyl_conf}.

A \emph{Galilean conformal transformation} of Galilean structures is defined as
\begin{equation}
  \label{eq:conf_trans_Gal}
  \tau_\mu \overset{\varphi}{\rightarrow} \e^{\conf} \tau_\mu \; , \quad
  h^{\mu\nu} \overset{\varphi}{\rightarrow} \e^{-2\conf} h^{\mu\nu} \; ,
\end{equation}
where $\conf$ is a scalar field.  We can extend this definition to any
timelike vector field $u^\mu$, its covariant spatial metric $\ub{}_{\mu\nu}$,
and its Coriolis field $\ukappa{}_{\mu\nu}$ (see \Cref{sec:Gal_con} for its definition) as
\begin{equation}
  u^\mu \overset{\varphi}{\rightarrow} \e^{-\conf} u^\mu \; , \quad
  \ub{}_{\mu\nu} \overset{\varphi}{\rightarrow} \e^{2\conf} \ub{}_{\mu\nu} \; , \quad
  \ukappa{}_{\mu\nu} \overset{\varphi}{\rightarrow} \e^{\conf} \ukappa{}_{\mu\nu} \; .
\end{equation}
Note that, in general, Galilean conformal transformations~\eqref{eq:conf_trans_Gal} do not
preserve closedness of the clock form, but preserve the condition
$\tau \wedge \dd\tau = 0$.  Since in the present paper we want to define a
conformally invariant Galilean Weyl tensor, it is therefore essential
to allow for non-closed, foliation-forming clock forms.

A \emph{conformal Galilean structure}, denoted $[\tau, h]$, is an equivalence
class of Galilean structures under Galilean conformal transformations.
A connection $\nabla$ is said to be \emph{compatible} with a conformal Galilean
structure $[\tau, h]$, or \emph{conformally compatible}, if for some
representative $(\tau_\mu, h^{\mu\nu})$ of the conformal structure there exists
a 1-form $n_\mu$ such that
\begin{align}
  \label{eq:conf_connection}
  \nabla_\mu \tau_\nu = n_\mu \tau_\nu\;, \quad
  \nabla_\alpha h^{\mu\nu} = - 2 n_\alpha h^{\mu\nu}\;.
\end{align}
Under a Galilean conformal transformation of the representative, the
1-form $n_\mu$ transforms as $n_\mu \rightarrow n_\mu + \partial_\mu \conf$.

\subsection{Galilean structures and general affine connections}
\label{sec:Gal_con}

Given a Lorentzian metric $g_{\mu\nu}$, a general affine connection $\nabla$ is
uniquely determined by its non-metricity $\nabla_\alpha g_{\mu\nu}$ and its torsion
$T^\alpha{}_{\mu\nu} = 2\Gamma^\alpha_{[\mu\nu]}$, which may be freely specified.  The unique
connection with vanishing torsion and non-metricity is the Levi-Civita
connection of the metric.  This gives rise, in particular, to the
notions of \emph{distortion}, \emph{contortion} and \emph{disformation}: the distortion is
the difference of a general connection to the Levi-Civita one, the
contortion is its torsion-dependent part, and the disformation is the
non-metricity-dependent part.

A similar situation appears for Galilean structures.  As shown in
ref.~\cite{Schwartz:2025}, given a Galilean structure $(\tau_\mu, h^{\mu\nu})$ and a choice
of timelike vector field $u^\mu$, a general affine connection is
uniquely determined by the non-metricities $Q_{\mu\nu} := \nabla_\mu \tau_\nu$ and
$Q_\alpha{}^{\mu\nu} := \nabla_\alpha h^{\mu\nu}$, the torsion $T^\alpha{}_{\mu\nu} := 2\Gamma^\alpha_{[\mu\nu]}\;$,
and the Coriolis (2-form) field with respect to $u^\mu$ defined as
$\ukappa{u}_{\mu\nu} := (\nabla_{[\mu} u^\alpha) \ub{u}_{\nu]\alpha}\;$: it takes the form
\begin{subequations} \label{eq:con_general}
\begin{align}
  \Gamma^\alpha_{\mu\nu}
  &= \ucheckGamma{u}^\alpha_{\mu\nu}
    + 2\tau_{(\mu} \ukappa{}_{\nu)}{}^\alpha
    + \ub{}^\alpha{}_\beta Q_{(\mu\nu)}{}^\beta
    - \tfrac{1}{2} Q^\alpha{}_{\mu\nu}
    - u^\alpha Q_{(\mu\nu)}
    - T_{(\mu\nu)}{}^\alpha
    + \tfrac{1}{2} T^\alpha{}_{\mu\nu} \; , \\
  \intertext{where indices have been lowered with $\ub{u}_{\mu\nu}$, we set}
  \label{eq:con_u}
  \ucheckGamma{u}^\alpha_{\mu\nu}
  &:= u^\alpha \partial_{(\mu} \tau_{\nu)}
    + h^{\alpha\sigma} \bigl( \partial_{(\mu} \ub{}_{\nu)\sigma}
      - \tfrac{1}{2} \partial_\sigma \ub{}_{\mu\nu} \bigr) \, ,
\end{align}
\end{subequations}
and where the non-metricities are constrained by
\begin{equation}
  \label{eq:Gal_Q_T}
  \tau_\sigma Q_\mu{}^{\nu\sigma} = -Q_{\mu}{}^\nu \; , \quad
  \tau_\sigma T^\sigma{}_{\mu\nu} = -2Q_{[\mu\nu]} + 2\omega_{\mu\nu} \; .
\end{equation}
Conversely, the fields $Q_{\mu\nu}$, $Q_\alpha{}^{\mu\nu}$, $T^\alpha{}_{\mu\nu}$, $\kappa_{\mu\nu}$
may be freely specified, up to the constraints \eqref{eq:Gal_Q_T}, and then the
connection defined by \eqref{eq:con_general} has these fields as its non-metricities,
torsion, and Coriolis field with respect to $u^\mu$.

\enlargethispage{.4\baselineskip}

In other words, $Q_\alpha{}^{\mu\nu}$, $Q_{\mu\nu}$, $T^\alpha{}_{\mu\nu}$ and
$\ukappa{}_{\mu\nu}$ are the `building blocks' of any affine connection
with respect to a Galilean structure, similarly to $\nabla_\alpha g_{\mu\nu}$ and
$T^\alpha{}_{\mu\nu}$ being the building blocks with respect to a Lorentzian
metric.

Fixing the connection, one can compute its Coriolis field with respect
to a different timelike vector field $u^\mu - v^\mu$ (with $v^\mu$
spacelike): it is given by
\begin{align}
  \label{eq:fix_con_Coriolis_trafo}
  \ukappa{u - v}_{\mu\nu}
  &= \ukappa{u}_{\mu\nu}
    - \partial_{[\mu} v_{\nu]}
    + \tau_{[\mu} \partial_{\nu]} \left(\tfrac{v^2}{2}\right)
    + \tfrac{1}{2} T^\rho{}_{\mu\nu} v_\rho \nonumber \\
  &\quad- Q_{[\mu\nu]\rho} v^\rho
    + \tfrac{1}{2} \tau_{[\mu} Q_{\nu]\rho\sigma} v^\rho v^\sigma
    + (u^\rho - v^\rho) \bigl(v_{[\mu} + v^2 \tau_{[\mu}\bigr) Q_{\nu]\rho} \; ,
\end{align}
where indices have been lowered with $\ub{u}_{\mu\nu}$, and $v^2 = v^\alpha
v_\alpha$.
\pagebreak

We see from~\eqref{eq:Gal_Q_T} that, if $\dd \tau \ne 0$, any metric-compatible
connection, i.e., one for which $Q_{\mu\nu} = 0$ and $Q_\alpha{}^{\mu\nu} = 0$,
must feature torsion and, conversely, any torsion-free connection must
feature non-metricities with respect to the Galilean structure.
Additionally, even when $\dd\tau = 0$, differently to the Lorentzian case
there is \emph{not} a unique compatible torsion-free connection due to the
freedom in the choice of Coriolis field.  This is the main difference
to the Lorentzian case.  However, in the case $\dd\tau = 0$, given a
choice of observer $u^\mu$ and of Coriolis field $\ukappa{}_{\mu\nu}$, there
is a unique compatible torsion-free connection, given by\looseness-1
\begin{equation}
  \label{eq:uni_Gal_con_d_tau_0}
  \Gamma^\alpha_{\mu\nu}
  = \ucheckGamma{}^\alpha_{\mu\nu} + 2 \tau_{(\mu} \ukappa{}_{\nu)}{}^\alpha \; .
\end{equation}
Such torsion-free, metric-compatible connections are considered in
classical Newton--Cartan gravity (see, e.g., refs.~\cite{Kuenzle:1976, Schwartz:2026}).
However, once $\dd\tau \ne 0$, in addition to a Coriolis field, one needs
a choice of torsion and non-metricities compatible with \eqref{eq:Gal_Q_T} to fix a
connection.  In the next section, we will discuss a condition that
fixes a unique connection even in this case.

\begin{remark}
  If the clock form is foliation forming, which we assume in the
  present paper, then there is always a unique \emph{spatial} connection
  defined from the Galilean structure alone, namely the Levi-Civita
  connection of the induced Riemannian spatial metrics $h_{ij}$ on the
  spatial leaves $\Sigma$.
\end{remark}

\subsection{A unique boost-invariant connection}
\label{sec:unique_A_con}

The perspective on affine connections discussed in the previous
section was a `classification' one: \emph{starting} with a connection $\nabla$, it
determines---and is determined by---its torsion, non-metricities with
respect to the Galilean structure, and Coriolis field with respect to
a choice of timelike vector field $u^\mu$, according to \eqref{eq:con_general}.

\medskip

Here we now take a complementary pespective: we start with the
Galilean structure, a unit timelike vector field $u^\mu$, and a choice
of Coriolis field, and want to \emph{define} a connection from these
ingredients---and possibly some additional structures---by some
specified prescription.  In this approach, one may in general choose
the other ingredients to depend arbitrarily on $u^\mu$, and hence the
connections arising for different choices of $u^\mu$ could in general
differ.

However, in discussions of physical theories based on Galilean
geometry, one usually requires that the connection be \emph{independent} of
the choice of timelike vector field, since otherwise some preferred
timelike vector field, or class of preferred timelike vector fields,
could be defined (e.g., by requiring the connection to take some
simple form).  One demands that the theory should not possess a class
of preferred observers, i.e., preferred timelike vector fields, from
the geometric structure of the theory alone; any class of preferred
observers should appear only once a field equation is
considered.\footnote{For example, this is the case in standard
  Newton--Cartan gravity, for which the Newton--Cartan field equation
  induces a preferred class of observers called \emph{Galilean observers}
  (see \Cref{def:Gal_obs}).}  Therefore, because for our definition of Galilean Weyl
tensors we want to stay at a geometric, `off-shell' level, in the
present paper we need to consider connections independent of the
chosen timelike vector field.

In other words, we want to consider connections defined from $(\tau_\mu,
h^{\mu\nu})$, a timelike vector field $u^\mu$, and a Coriolis field in such
a way that they are invariant under the transformation $u^\mu \rightarrow u^\mu -
v^\mu$, where $v^\mu$ is a spacelike vector field.  Such a transformation
is called a \emph{(local) Galilean boost} with parameter $v^\mu$, and hence we
call connections defined by such a prescription \emph{boost-invariant
  connections}.  Note that for a boost-invariant
connection, the Coriolis field necessarily needs to transform under
boosts according to \eqref{eq:fix_con_Coriolis_trafo}: the connection is independent of the choice
of $u^\mu$, i.e., fixed, and hence the discussion from the previous
section applies.  The possibly remaining freedom in how to define a
boost-invariant connection resides in the choice of torsion and
non-metricities---which must also be boost-invariant (see \Cref{prop:boost-inv} in \Cref{app:proof_unique_A_con}
for details).

When $\dd\tau = 0$, there is a natural choice of torsion and
non-metricities that yields, for a given choice of $(u^\mu,
\ukappa{u}_{\mu\nu})$, a boost-invariant connection: we may choose the
torsion and non-metricities to vanish, yielding the connection \eqref{eq:uni_Gal_con_d_tau_0}.
In this case, the transformation law \eqref{eq:fix_con_Coriolis_trafo} for the Coriolis field
simplifies to
\begin{equation}
  \ukappa{}_{\mu\nu}
  \xrightarrow{u \to u - v} \ukappa{}_{\mu\nu}
  - \partial_{[\mu} \left(v_{\nu]} + \tau_{\nu]} \tfrac{v^2}{2} \right).
\end{equation}

Interestingly, as we prove in \Cref{app:proof_unique_A_con}, there is also a uniqueness result
for a boost-invariant connection when we only require $\tau \wedge \dd\tau = 0$,
similarly to the $\dd\tau=0$ case:
\begin{proposition}
  \label{prop:unique_A_con}
  Given a Galilean structure $(\tau_\mu, h^{\mu\nu})$ with $\tau\wedge\dd\tau = 0$ and a
  choice of timelike vector field and Coriolis field $(u^\mu,
  \ukappa{}_{\mu\nu})$, there is a unique boost-invariant connection,
  denoted $\unicon{\nabla}$,
  \begin{enumerate}
  \item whose Coriolis field with respect to $u^\mu$ is the given
    $\ukappa{}_{\mu\nu}$, and
  \item whose torsion and non-metricities are defined locally from $(\tau_\mu,
    h^{\mu\nu})$ and $u^\mu$, depending on derivatives of $\tau_\mu$, $h^{\mu\nu}$
    only to first order and linearly.
  \end{enumerate}

  It is given by
  \begin{equation} \label{eq:unique_A_con}
    \unicon{\Gamma}^\alpha_{\mu\nu}
    = \ucheckGamma{}^\alpha_{\mu\nu} + 2\tau_{(\mu} \ukappa{}_{\nu)}{}^\alpha
      - \frac{1}{2} \ub{}_{\mu\nu} \ator^\alpha \; ,
  \end{equation}
  or, equivalently, by
  \begin{equation}
    \unicon{Q}_{\mu\nu} = \omega_{\mu\nu} \; , \quad
    \unicon{Q}_\alpha{}^{\mu\nu} = -\delta^{(\mu}_\alpha \ator^{\nu)\vphantom{\mu}}_{\vphantom{\alpha}} \; , \quad
    \unicon{T}^\alpha{}_{\mu\nu} = 0 \; .
  \end{equation}

  Under boosts, the Coriolis field transforms as
  \begin{equation}
    \ukappa{}_{\mu\nu}
    \xrightarrow{u \to u - v}
    \ukappa{}_{\mu\nu}
    - \partial_{[\mu} \left(v_{\nu]} + \tau_{\nu]} \tfrac{v^2}{2} \right).
  \end{equation}
\end{proposition}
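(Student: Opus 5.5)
Our approach rests on the decomposition theorem \eqref{eq:con_general}. Any connection is fixed by $(Q_{\mu\nu}, Q_\alpha{}^{\mu\nu}, T^\alpha{}_{\mu\nu})$ together with $\ukappa{}_{\mu\nu}$. Condition~1 already fixes $\ukappa{}_{\mu\nu}$, so uniqueness reduces to pinning down the torsion and non-metricities. By the discussion preceding the proposition (and \Cref{prop:boost-inv}), boost invariance of the connection is equivalent to two requirements: these three tensors must be boost-invariant, and the Coriolis field must transform by \eqref{eq:fix_con_Coriolis_trafo}.

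\textbf{Step 1: classify the admissible tensors.} Condition~2 says each of $Q_{\mu\nu}$, $Q_\alpha{}^{\mu\nu}$, $T^\alpha{}_{\mu\nu}$ is built locally from $(\tau_\mu, h^{\mu\nu}, u^\mu)$, using only first derivatives of $\tau_\mu, h^{\mu\nu}$ and depending on them linearly. Combined with boost invariance, this leaves few candidates.
\begin{itemize}
\item First derivatives of $h^{\mu\nu}$ do not form a tensor on their own. The only tensorial, boost-invariant combinations linear in $\partial\tau$, $\partial h$ come from $\dd\tau$. Equivalently, since $\tau\wedge\dd\tau=0$, they come from $\omega_{\mu\nu}=\tau_{[\mu}\ator_{\nu]}$ and $\ator^\mu$, both of which the text notes are independent of $u^\mu$.
\item An expression like $\partial_{(\mu}\tau_{\nu)}$ is not a tensor. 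Also, any $u$-dependent contraction such as $u^\alpha\omega_{\alpha\mu}$ fails to be boost-invariant, unless it reduces to an expression in $\ator$ and $\tau$ alone.
\end{itemize}
So we write the most general ansatz as linear combinations of $\omega_{\mu\nu}$, $\delta\otimes\ator$, $\tau\otimes\tau\otimes\ator$, $h\otimes\ldots$, and so on, with constant coefficients and the correct index structure. The zeroth-order pieces must vanish, because the tensors must be linear in derivatives. For example, $Q_{\mu\nu}=a\,\omega_{\mu\nu}$ and $Q_\alpha{}^{\mu\nu}=b\,\delta_\alpha^{(\mu}\ator^{\nu)}$. Further admissible terms such as $\tau_\alpha(\ldots)$ would need a boost-invariant vector paired with $\ator$, and the only one available is $\ator$ itself, which is of second order.

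\textbf{Step 2: impose the constraints.} The constraints \eqref{eq:Gal_Q_T} fix the coefficients. From $\tau_\sigma Q_\mu{}^{\nu\sigma}=-Q_\mu{}^\nu$ we get $Q_{\mu\nu}$ in terms of $b$. The torsion equation $\tau_\sigma T^\sigma{}_{\mu\nu}=-2Q_{[\mu\nu]}+2\omega_{\mu\nu}$ then relates $T$ to $a$.

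This is the step we expect to be the main obstacle: ruling out a one-parameter family that trades torsion for non-metricity. We would check that any admissible torsion term, for instance $c\,u^\alpha\omega_{\mu\nu}$ or $c\,\delta^\alpha_{[\mu}\ator_{\nu]}$-type terms, is not boost-invariant, because it either contains $u^\alpha$ or lowers an index with $\ub{u}$. This would force $T=0$, hence $a=1$, and then $b=-1$. We would verify each candidate's boost behaviour carefully using $\ub{u-v}_{\mu\nu}=\ub{u}_{\mu\nu}+2\tau_{(\mu}v_{\nu)}+v^2\tau_\mu\tau_\nu$ (up to conventions).

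\textbf{Step 3: existence and explicit form.} We substitute $Q_{\mu\nu}=\omega_{\mu\nu}$, $Q_\alpha{}^{\mu\nu}=-\delta^{(\mu}_\alpha\ator^{\nu)}$, $T=0$ into \eqref{eq:con_general}. Using $\ub{}^\alpha{}_\beta\,\omega_{(\mu\nu)}{}^\beta=0$ and $Q_{(\mu\nu)}=0$, only $-\tfrac12 Q^\alpha{}_{\mu\nu}$ survives, and it gives $-\tfrac12\ub{}_{\mu\nu}\ator^\alpha$ plus terms involving $\tau$. We would recheck that these $\tau$-terms are consistent with \eqref{eq:unique_A_con}, which yields the stated connection. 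These tensors are manifestly boost-invariant, so \Cref{prop:boost-inv} gives boost invariance, provided $\kappa$ transforms by \eqref{eq:fix_con_Coriolis_trafo}. In that formula the torsion term vanishes. The $Q$-terms give $-\omega_{\mu\nu}$-contractions with $v$ and the $\tfrac12\tau Q v v$ term; these simplify using $\ator_\rho v^\rho$ and combine into $-\partial_{[\mu}(v_{\nu]}+\tau_{\nu]}v^2/2)$. In doing so, the $\partial_{[\mu}\tau_{\nu]}\,v^2/2$ piece compensates the $\omega$ contributions.
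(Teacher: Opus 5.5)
\textbf{Main gap: the classification in Step~1 is asserted, not proved.} Your route differs from the paper's. You use boost invariance to make $Q_{\mu\nu}$, $Q_\alpha{}^{\mu\nu}$, $T^\alpha{}_{\mu\nu}$ independent of $u^\mu$, build them from $\tau_\mu$, $h^{\mu\nu}$, $\delta^\mu_\nu$, $\ator^\mu$, $\omega_{\mu\nu}$, and then let the identities \eqref{eq:Gal_Q_T} fix $a=-b=1$. The paper instead writes a general $u$-dependent ansatz for $\Gamma$ itself, so \eqref{eq:Gal_Q_T} hold automatically, and kills every coefficient except $c_2=-\tfrac12$ through linearised boost variations. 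Your version would be shorter, but Step~1 carries the whole uniqueness claim, and the reason you give does not cover the relevant candidates. Condition~2 allows dependence on $u^\mu$, so $\Theta_{\mu\nu}=\tfrac12\Lie{u}\ub{}_{\mu\nu}$ and $\theta$ are tensorial and linear in $\partial h$. Saying that ``first derivatives of $h^{\mu\nu}$ are not a tensor'' does not exclude them. Excluding them is exactly the $d_1=\dots=d_7=0$ part of the paper's proof: their boost variations contain $\Lie{v}\ub{}_{\mu\nu}$, i.e.\ first derivatives of $v^\mu$, which no $\ator$-term can produce. You need either that computation, or a stated lemma that natural tensors linear in the first jet of $(\tau_\mu,h^{\mu\nu})$ factor through $\dd\tau$ (the intrinsic torsion of a Galilean structure). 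That lemma must then be followed by an actual enumeration of the invariant tensors linear in $\ator$.

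\textbf{The torsion step and Step~3.} Your criterion ``contains $u^\alpha$ or lowers an index with $\ub{u}_{\mu\nu}$'' does not imply non-invariance. Your own $Q_{\mu\nu}=a\,\tau_{[\mu}\uator{}_{\nu]}$ lowers an index with $\ub{u}_{\mu\nu}$ and is invariant. The real check is on the general combination. The torsion $2c_4\ub{}^\alpha{}_{[\mu}\uator{}_{\nu]}+2c_5u^\alpha\tau_{[\mu}\uator{}_{\nu]}$ changes by $2(c_4-c_5)v^\alpha\tau_{[\mu}\uator{}_{\nu]}+2c_4(\ator\cdot v)\delta^\alpha_{[\mu}\tau_{\nu]}$. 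It is the $(\ator\cdot v)$ term that removes the one-parameter family $c_4=c_5$ you were worried about; that family survives under restricted boosts, cf.\ \Cref{app:restricted_boosts}.

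Be aware also that your building blocks, like the paper's list \eqref{eq:proof_unique_A_con_indep_terms}, tacitly omit the volume form. The tensor $h^{\alpha\beta}\eta_{\beta\gamma\mu\nu}\ator^\gamma$ is $u$-independent, linear in $\ator$, antisymmetric in $\mu\nu$, and annihilated by $\tau_\alpha$. It could therefore be added to the torsion without touching the non-metricities, so orientation-dependent terms must be excluded by hypothesis for $T=0$ to follow.

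Finally, Step~3 misreads \eqref{eq:con_general}. The term $\ub{}^\alpha{}_\beta Q_{(\mu\nu)}{}^\beta$ contains the spatial non-metricity with an index lowered, not $\omega_{\mu\nu}$, and equals $-\tfrac12(\ub{}_{\mu\nu}\ator^\alpha+\ub{}^\alpha{}_{(\mu}\uator{}_{\nu)})$. Meanwhile $-\tfrac12 Q^\alpha{}_{\mu\nu}=+\tfrac12\ub{}^\alpha{}_{(\mu}\uator{}_{\nu)}$, so the sum is exactly $-\tfrac12\ub{}_{\mu\nu}\ator^\alpha$, with no leftover $\tau$-terms. Your sketch of the Coriolis law is right: the non-derivative terms add up to $-\tfrac{v^2}{2}\omega_{\mu\nu}$, which combines with $\tau_{[\mu}\partial_{\nu]}(v^2/2)$ into $-\partial_{[\mu}(\tau_{\nu]}v^2/2)$.
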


\begin{remark}
  When $\dd\tau=0$, the connection $\unicon{\nabla}$ reduces to the classical
  connection~\eqref{eq:uni_Gal_con_d_tau_0}.
\end{remark}

To our knowledge, this is the first time this connection is derived,
as boost-invariant connections considered in the literature are always
constructed with extra structures on top of $(\tau_\mu, h^{\mu\nu}, u^\mu,
\ukappa{}_{\mu\nu})$, such as a mass gauge field (see ref.~\cite{Hansen.Hartong.Obers:2020} and
references therein).  We include a presentation and discussion of
these approaches in \Cref{app:boost_inv_extra}.

The uniqueness of the connection~\eqref{eq:unique_A_con}, along with how its Coriolis
field transforms under Galilean boosts, is reminiscent of the natural
connection~\eqref{eq:uni_Gal_con_d_tau_0} obtained from a closed clock form.  In this sense, the
connection~\eqref{eq:unique_A_con} is the most natural connection that should be
considered from $(\tau_\mu, h^{\mu\nu}, u^\mu, \ukappa{u}_{\mu\nu})$ when $\tau\wedge\dd\tau =
0$.

\section{The Galilean limit}
\label{sec:Galilean_limit}

The definition of the Galilean Weyl tensor which we will provide in \Cref{sec:Gal_Weyl_conf}
will be obtained from the Galilean limit of the Lorentzian Weyl
tensor.  We detail that limit in the present section.  In particular,
we provide formulae for a non-closed clock form, and we provide the
limit of the 1+3 quantities.  As already remarked in \cref{sec:definitions}, we use the
terminology `Galilean' rather than `Newtonian' for denoting the $c \to
\infty$ limit of Lorentzian geometry to highlight the fact that this is a
geometric (off-shell) approach (see also our discussion in \Cref{sec:Newtonianity} on
the difference between a `Galilean theory' and a `Newtonian theory').

\subsection{The ansatz}
\label{sec:ansatz_limit}

In the classical ansatz for the Galilean limit, the first two leading
orders of the Lorentzian metric have the form
\begin{subequations} \label{eq:NR_limit}
\begin{align}
  \tayll{g}^{\mu\nu}
  &= h^{\mu\nu} + \lambda \, \left(-\ulim^\mu \ulim^\mu + k^{\mu\nu}\right) + \bigOl{2}
    , \\
  \tayll{g}_{\mu\nu}
  &= -\tfrac{1}{\lambda}\tau_\mu \tau_\nu + \blim_{\mu\nu} - 2\phi \tau_\mu \tau_\nu + \bigOl{1} ,
\end{align}
\end{subequations}
where $\ulim^\mu$ is a timelike vector field for which we write
$\blim_{\mu\nu} := \ub{\ulim}_{\mu\nu}$, $\phi$ is a scalar field, and $k^{\mu\nu}$
is a spacelike tensor.

\begin{remark}
  \label{rem:limit_gauge}
  In the geometry arising from the Galilean limit, there is a gauge
  freedom corresponding to the application of $\cc$-dependent
  diffeomorphisms to the Lorentzian objects \cite{Malik.Matravers:2008, Hansen.Hartong.Obers:2020, Hartong.Obers.Oling:2023}.  In particular, this
  includes the gauge transformation $\ulim^\mu \rightarrow \ulim^\mu - \psi \ator^\mu -
  h^{\mu\alpha} \partial_\alpha \psi$, where $\psi$ is a scalar field.  Interestingly, this
  gauge freedom leaves unchanged the vorticity tensor of any observer
  (see \Cref{sec:Gal_kin_lim}, in particular \Cref{rem:vort_free_observers}).
\end{remark}

For a general Galilean limit, the 1-form $\tau_\mu$ is not necessarily
exact, nor foliation forming.  In the literature, it is however almost
always assumed to be foliation forming, i.e., to satisfy $\tau \wedge \dd \tau =
0$.  This can be justified from the limit of the Einstein equation
with some assumptions on the leading order of the energy-momentum
tensor \cite{Dautcourt:1990b, Van_den_Bleeken:2017, Hansen.Hartong.Obers:2020}.  A stronger result was obtained in
ref.~\cite{Vigneron.Barzegar.Read:2025}, where the authors showed that, prior to considering any
field equations, the Frobenius integrability condition is necessarily
satisfied if global hyperbolicity is assumed for the Lorentzian
structure, and that the relation $\tau = \e^\psi \dd t$ holds globally,
where $\psi$ and $t$ are scalar fields.  We assume $\tau \wedge \dd\tau =
0$ for the rest of the paper.

\begin{remark}
  \textcite{Ergen.Hamamci.Van_den_Bleeken:2020} argued that choosing $\tau = \dd t$ is a gauge choice, without
  loss of generality for the limit, corresponding to a choice of
  foliation with respect to which performing the limit.  However, as
  we will be interested in conformal transformations of the Galilean
  Weyl tensor, we cannot fix the clock form to be exact.  Furthermore,
  one can argue that even though the choice of foliation might be seen
  as a choice of gauge on the Lorentzian level, performing the limit
  with respect to different foliations corresponds to taking the point
  of view of different families of observers, which are (in the limit)
  not related by Galilean boost transformations---i.e., when
  `forgetting' the Lorentzian origin of the Galilean theory, the
  choice of foliation becomes a \emph{physical} one.
\end{remark}

The spacetime volume form $\tayll{\epsilon}_{\alpha\beta\mu\nu}$ of the metric
$\tayll{g}_{\mu\nu}$ becomes
\begin{equation}
  \tayll{\epsilon}_{\alpha\beta\mu\nu}
  := \sqrt{\cc \, \left|\det\bigl(\tayll{g}_{\sigma\omega}\bigr)\right|} \, \varepsilon_{\alpha\beta\mu\nu} \\
  = \eta_{\alpha\beta\mu\nu} + \bigOl{1} \, .
\end{equation}

The \acronym{LO} term of the Levi-Civita connection is $\tayl{-1}{\Gamma}^\alpha_{\mu\nu}\;$,
and we have
\begin{align}
  \tayl{-1}{\Gamma}^\alpha_{\mu\nu}
  &= -\tau_\mu \tau_\nu \, \ator^\alpha \, ,\\
  \label{eq:Gamma_0}
  \tayl{0}{\Gamma}^\alpha_{\mu\nu}
  &= \ucheckGamma{\ulim}^\alpha_{\mu\nu}
    + \tau_\mu \tau_\nu \left(h^{\alpha\sigma} \partial_\sigma \phi - 2\phi \ator^\alpha - k^{\alpha \sigma} \ator_\sigma\right)
    - \ulim^\alpha \tau_{(\mu} \atorlim_{\nu)} \; ,
\end{align}
where we write $\atorlim_\mu := \blim_{\mu\alpha} \ator^\alpha$.

We see that the zeroth order term of the Levi-Civita connection is a
connection of the form~\eqref{eq:con_Gal_ator} with $c_1 = c_2 = c_4 = c_5 = 0$, $c_3 =
-1$, $d_i = 0$, and where the Coriolis field with respect to $\ulim^\mu$
is $\kappalim_{\mu\nu} := \ukappa{\ulim}_{\mu\nu} = \tau_{[\mu}
\left(\blim^{\alpha}{}_{\nu]} \partial_\alpha \phi - 2\phi \hat\ator_{\nu]} - \blim_{\nu]\alpha}k^{\alpha\beta}
  \hat\ator_\beta\right)$.  Therefore, this connection is not
boost-invariant, as it is different from the unique boost-invariant
connection~\eqref{eq:unique_A_con} (with the same Coriolis field $\kappalim_{\mu\nu}$).
However, when the clock form is closed, then $\tayl{-1}{\Gamma}^\alpha_{\mu\nu} =
0$, and $\tayl{0}{\Gamma}^\alpha_{\mu\nu}$ corresponds to the unique torsion-free
compatible connection, with $\kappalim_{\mu\nu} = \tau_{[\mu} \blim^{\alpha}{}_{\nu]}
\partial_\alpha \phi$.

The zeroth order connection is related to the boost-invariant
connection $\unicon\nabla$ (with the same Coriolis field $\kappalim_{\mu\nu}$)
by
\begin{equation}
  \label{eq:Gamma_0_to_uni}
  \tayl{0}{\Gamma}^\alpha_{\mu\nu}
  = \unicon\Gamma^\alpha_{\mu\nu} + \frac{1}{2} \blim_{\mu\nu} \ator^\alpha
    - \tau_{(\mu} \atorlim_{\nu)} \ulim^\alpha \; .
\end{equation}
Note that the spatial projection of the Coriolis field vanishes, i.e.,
we have
\begin{equation}
  \label{eq:Coriolis_lim}
  \kappalim^{\mu\nu} = h^{\alpha[\mu} \blim^{\nu]}{}_{\beta} \unicon\nabla_{\alpha} \ulim^\beta = 0 \, .
\end{equation}

The \acronym{LO} term of the Riemann tensor is $\tayl{-1}{R}^\alpha{}_{\beta\mu\nu}\,$, and
we have
\begin{subequations}
\begin{align}
  \tayl{-1}{R}^\alpha{}_{\beta\mu\nu}
    &= 2\tau_\beta \tau_{[\mu} \left(\unicon\nabla_{\nu]} \ator^\alpha - \tfrac{1}{2} \atorlim_{\nu]} \ator^\alpha \right) , \\
  \tayl{0}{R}^\alpha{}_{\beta\mu\nu}
    &= \iR{\tayl{0}{\nabla}}{R}^\alpha{}_{\beta\mu\nu}
      + \tau_\beta \tau_{[\mu} \left(\ulim^\alpha \ator^\sigma \Lie{\ulim} \blim_{\nu]\sigma}
      - 2 \ator^\alpha \partial_{\nu]} \phi \right)
      - \ator^\alpha \tau_{[\mu} \Lie{\ulim} \blim_{\nu]\beta} \; ,
\end{align}
\end{subequations}
and for the Ricci tensor
\begin{subequations}
\begin{align}
  \tayl{-1}{R}_{\mu\nu}
    &= - \tau_\mu \tau_\nu \unicon\nabla_\alpha \ator^\alpha \, , \\
%
  \tayl{0}{R}_{\mu\nu}
    &= \iR{\tayl{0}{\nabla}}{R}_{\mu\nu}
      + \tau_\mu \tau_\nu \, \ulim^\sigma \partial_\sigma \phi
      + \tau_{(\mu} \Lie{\ulim} \blim_{\nu)\sigma} \ator^\sigma \, ,
\end{align}
\end{subequations}
where $\iR{\tayl{0}{\nabla}}{R}^\alpha{}_{\beta\mu\nu}$ is the Riemann tensor of the
connection \eqref{eq:Gamma_0} arising as the zeroth order term of the Levi-Civita
connection.  Note that we used the unique boost-invariant connection
$\unicon\nabla$ in these expressions.

\subsection{Limit of the kinematical quantities}
\label{sec:limit_kin_vars}

In this section, we describe the Galilean limit of the kinematical
quantities.  We will use these later when expressing the electric and
magnetic Galilean Weyl tensors spatially.

\subsubsection{Limit of $U$-orthogonal quantities}
\label{sec:limit_spatial}

Any vector field $\tayll{U}^\mu$ timelike with respect to the Lorentzian
metric has an expansion of the form $\tayll{U}^\mu = u^{\mu} + \bigOl{1}$
where $u^\mu$ is unit timelike with respect to the Galilean structure
obtained in the limit.  Furthermore, for the associated one-form we
have $\tayll{U}_\mu = - \tfrac{1}{\cc} \tau_\mu + \bigOl{0}$, and
\begin{equation}
  \ubg{\tayll{U}}_{\mu\nu} = \ub{u}_{\mu\nu} + \bigOl{1} , \quad
  \ubg{\tayll{U}}^{\mu\nu} = h^{\mu\nu} + \bigOl{1} , \quad
  \ubg{\tayll{U}}^{\mu}{}_{\nu} = \ub{u}^{\mu}{}_{\nu} + \bigOl{1} .
\end{equation}
Therefore, as for the timelike vector field, the Lorentzian orthogonal
projector at \acronym{LO} corresponds to the Galilean covariant spatial metric for
the (0,2) version, to the Galilean spatial projector for the (1,1)
version, and to the spatial metric of the Galilean structure for the
(2,0) version.  There are two main consequences of this fact:
\begin{enumerate}
\item For any timelike vector field $\tayll{U}^\mu$, any contravariant
  tensor (say, $p^\mu$) orthogonal to $\tayll{U}^\mu$ will be annihilated
  by $\tau_\mu$ at \acronym{LO}, and therefore $\tayl{0}{p}^\mu$ defines, after
  pull-back, a spacelike tensor on $\Sigma$ (see \Cref{rem:proj}).  Using a basis
  adapted to the spatial foliation, we can denote that tensor with
  spatial indices as $\tayl{0}{p}^{\,i}$.  Therefore, while before
  taking the limit, a tensor orthogonal to $\tayll{U}^\mu$ cannot be
  defined on a `spatial foliation' if $\iR{\tayll{U}\,}{\Omega}_{\mu\nu} \ne 0$
  (since no such foliation exists), at \acronym{LO} that tensor always defines
  uniquely a spacelike tensor on the Galilean spatial foliation.
\item At \acronym{LO}, the spatial `connection' $\iR{\tayll{U}}{D}_\mu$ with respect
  to $\iR{\tayll{U}}{\ubg{}}_{\mu\nu}$ (defined in~\eqref{eq:def_D_spatial}) corresponds, after
  pull-back to $\Sigma$, to the unique spatial connection (denoted $D_i$)
  associated to the Galilean structure, namely the Levi-Civita
  connection of the Riemannian metric induced on $\Sigma$ by~$h^{\mu\nu}$.  The
  same holds for the \acronym{LO} of the `Riemann tensor' of that `connection',
  whose pull-back to $\Sigma$ we denote by $\CR_{ij}$.  Therefore, at \acronym{LO},
  $\iR{\tayll{U}}{D}_\mu$ has all the properties of an affine connection
  on $\Sigma$, even if $\iR{\tayll{U}\, }{\Omega}_{\mu\nu} \ne 0$ (see \Cref{rem:D_spatial}).
\end{enumerate}

\subsubsection{Galilean kinematical quantities}
\label{sec:Gal_kin_lim}

Given a Lorentzian timelike vector field $\tayll{U}^\mu$ with \acronym{LO} term
$u^\mu$, the \acronym{LO} terms of the kinematical quantities are
\begin{subequations} \label{eq:exp_kin}
\begin{align}
  \uTheta{\tayll{U}}^{\mu\nu}
  &= h^{\alpha\mu} h^{\beta\nu}\, \frac{1}{2} \Lie{u} \ub{u}_{\alpha\beta} + \bigOl{1} , \\
  \uOmega{\tayll{U}}^{\mu\nu}
  &= h^{\alpha\mu} h^{\beta\nu} \, \partial_{[\alpha} w_{\beta]}
    + \ator^{[\mu} w^{\nu]} + \bigOl{1} , \\
  \ua{\tayll{U}}^{\mu}
  &= - \frac{1}{\cc} \ator^\mu + \bigOl{1} ,
\end{align}
\end{subequations}
where $w^\mu := u^\mu - \ulim^\mu$ is the spacelike `tilt' / boost vector
field between the considered timelike vector field $u^\mu$ and the one
defined by the Galilean limit \eqref{eq:NR_limit}, and the index on $w_\beta$ was lowered
with $\ub{u}_{\mu\nu}$.  We want to define the Galilean expansion and
vorticity tensors of $u^\mu$ as these \acronym{LO} fields.  For the
\emph{(Galilean) expansion tensor of $u^\mu$}, this yields the definition
\begin{equation}
  \label{eq:lim_exp}
  \uTheta{u}^{\mu\nu}
  := h^{\alpha\mu} h^{\beta\nu}\, \frac{1}{2} \Lie{u} \ub{}_{\alpha\beta} \; .
\end{equation}
Note that due to $\ub{}_{\alpha\beta} u^\alpha = 0$, the Lie derivative $\Lie{u}
\ub{}_{\alpha\beta}$ vanishes when contracted with $u^\alpha$ (or $u^\beta$), such that
the index-lowered expansion tensor takes the simple form
$\uTheta{u}_{\mu\nu} = \frac{1}{2} \Lie{u} \ub{}_{\mu\nu}$.  For the
\emph{(Galilean) vorticity tensor of $u^\mu$}, we obtain the definition
\begin{equation}
  \label{eq:lim_vort}
  \uOmega{u}^{\mu\nu}
  := h^{\alpha\mu} h^{\beta\nu} \, \partial_{[\alpha} w_{\beta]} + \ator^{[\mu} w^{\nu]} \; .
\end{equation}
Note that this expression depends not only on $u^\mu$ and the Galilean
structure $(\tau_\mu, h^{\mu\nu})$, but also on the timelike vector field
$\ulim^\mu$ arising in the Galilean limit, since $w^\mu = u^\mu - \ulim^\mu$.
What are we to make of this dependence?  By definition, the Galilean
vorticity of $\ulim^\mu$ vanishes, i.e., $\uOmega{\ulim}^{\mu\nu} = 0$.
Further, if $\check u^\mu$ is \emph{any} timelike vector field with vanishing
Galilean vorticity, we see directly that the vorticity of $u^\mu$ can be
written as
\begin{equation}
  \label{eq:lim_vort_check_w}
  \uOmega{u}^{\mu\nu}
  = h^{\alpha\mu} h^{\beta\nu} \, \partial_{[\alpha} \check w_{\beta]} + \ator^{[\mu} \check w^{\nu]}
\end{equation}
with $\check w^\mu = u^\mu - \check u^\mu$.  In this sense, the notion of
Galilean vorticity depends, in addition to the Galilean structure, on
the \emph{family of vorticity-free observers} (timelike vector fields), which
is fixed by the Galilean limit, as it depends on the tilt $\check w_\mu$
with respect to any of these observers.  Note that this dependence is
already present in the classical case $\dd\tau = 0$.

By rewriting the first term of \eqref{eq:lim_vort_check_w} in terms of the boost-invariant
connection $\unicon\nabla$ arising in the Galilean limit, the dependence on
the family of vorticity-free observers can in a certain sense be
completely `pushed' into the second, $\ator^\mu$-dependent term, as
follows. By a direct computation, one can show
\begin{equation}
  \label{eq:Coriolis_tilde_nabla}
  h^{\alpha\mu} h^{\beta\nu} \, \partial_{[\alpha} \check w_{\beta]}
  = h^{\alpha[\mu} \ub{}^{\nu]}{}_\beta \unicon\nabla_\alpha u^\beta
  - h^{\alpha[\mu} \, \check{\ub{}}^{\nu]}{}_\beta \unicon\nabla_\alpha \check u^\beta \; ,
\end{equation}
where we wrote $\check{\ub{}}^\mu{}_\nu := \ub{\check u}^\mu{}_\nu$.  This is
precisely the difference of the spatial projections of the Coriolis
fields of $\unicon\nabla$ with respect to $u^\mu$ and $\check u^\mu$.  Thus,
the vorticity can be written as
\begin{equation}
  \uOmega{u}^{\mu\nu}
  = \ukappa{u}^{\mu\nu} - \ukappa{\check u}^{\mu\nu}
    + \ator^{[\mu} \check w^{\nu]} \; ,
\end{equation}
where, as above, $\check u^\mu$ is a member of the family of
vorticity-free timelike vector fields.  In particular, in terms of the
field $\ulim^\mu$ arising in the Galilean limit, by \eqref{eq:Coriolis_lim} we have
\begin{equation}
  \uOmega{u}^{\mu\nu}
  = \ukappa{u}^{\mu\nu} + \ator^{[\mu} w^{\nu]} \; .
\end{equation}
In the classical case $\dd\tau = 0$, the vorticity of $u^\mu$ can hence be
rewritten in a form that does not depend \emph{explicitly} on the family of
vorticity-free observers, at the expense of expressing it in terms of
the spacetime connection.  Nevertheless, even in this case
$\uOmega{u}_{ij}$ is still imposed to be an exact (spatial) 2-form
by~\eqref{eq:Coriolis_tilde_nabla}, which is related, in particular, to the so-called \emph{Newtonian
  condition} introduced by Trautman \cite{Trautman:1963, Trautman:1965}.

Note that the Galilean expansion and vorticity tensors as defined here
reduce to those from standard Newton--Cartan gravity when assuming
$\dd\tau = 0$ (see \cite[Section~3.2 and Remark~3.51]{Schwartz:2026}).

\begin{remark}
  \label{rem:vort_free_observers}
  In the classical Galilean limit, i.e., with $\dd\tau = 0$, the family
  of vorticity-free observers is easily characterised: any two of them
  differ by a spacelike vector field whose associated spatial one-form
  is (spatially) closed.  In what follows, we generalise this result
  in the presence of a non-trivial time dilation vector~field.

  Taking the exterior derivative of the equation $\dd\tau = \tau \wedge \ator$,
  we obtain $0 = -\tau \wedge \dd\ator$, implying that the spatial pull-back
  $\ator_i$ of the time dilation vector field to any spatial leaf $\Sigma$
  is a closed (spatial) 1-form.  In particular, locally we have
  $\ator_i = -\partial_i \Psi$ for some scalar field $\Psi$, i.e.,
  \begin{equation}
    \label{eq:ator_local}
    \ator\big|_\Sigma = -\dd\bigl(\Psi\big|_\Sigma\bigr) \, .
  \end{equation}

  Let now $\check w^\mu$ be the difference between an arbitrary timelike
  vector field $u^\mu$ and a vorticity-free timelike vector field
  $\check u^\mu$.  From \eqref{eq:lim_vort_check_w}, we know that $u^\mu$ is itself vorticity-free
  if and only if the spacelike form $\check w_i$ satisfies
  \begin{equation}
    0 = \dd\bigl(\check w|_\Sigma\bigr) + \ator \wedge \check w \, .
  \end{equation}
  Combined with \eqref{eq:ator_local}, this is equivalent to
  \begin{equation}
    0
    = \dd\bigl(\check w|_\Sigma\bigr)
      - \dd\bigl(\Psi\big|_\Sigma\bigr) \wedge \check w
    = \e^\Psi \dd\bigl(\e^{-\Psi} \check w|_\Sigma\bigr) \, ,
  \end{equation}
  i.e., to
  \begin{equation}
    \label{eq:vorticity_free_condition}
    \check w_i = \e^\Psi n_i \; ,
  \end{equation}
  where $n_i$ is a spatially closed 1-form, $\dd\bigl(n|_\Sigma\bigr) = 0$.
  \pagebreak

  In other words, the Galilean limit implies the existence of a family
  of (Galilean) vorticity-free observers that all differ by a spatial
  velocity that is locally of the form~\eqref{eq:vorticity_free_condition}. This form includes in
  particular the gauge freedom on $\hat u^\mu$ discussed in \Cref{rem:limit_gauge}, for
  which $n_i$ is exact.

  As mentioned in \Cref{sec:ansatz_limit}, global hyperbolicity of the Lorentzian
  structure imposes $\tau = \e^\Psi \dd t$ to hold globally, such that \eqref{eq:ator_local}
  holds globally, and the difference of vorticity-free observers has
  the form~\eqref{eq:vorticity_free_condition} globally.
\end{remark}

The Galilean expansion and vorticity tensors \eqref{eq:lim_exp}, \eqref{eq:lim_vort} are spacelike
with respect to the Galilean structure, and therefore define tensors
on each spatial leaf $\Sigma$ (see \Cref{sec:limit_spatial}).  We denote these spatial tensors
by ${\uTheta{u}}_{ij}$ and ${\uOmega{u}}_{ij}$, respectively, and call
them by the names of \emph{(Galilean) expansion} and \emph{vorticity} tensors of
$u^\mu$ as well.  We will discuss some of their properties in \Cref{sec:inter_mag}.

Let us stress again that, while the \acronym{LO} term of the spatial connection
$\iR{\tayll{U}}{D}_\mu$ does not depend on the choice of timelike vector
field, the \acronym{LO} terms of the expansion and vorticity tensors \emph{do} depend
on that vector field---they are properties of the vector field, not of
the spacetime.  For two different Galilean timelike vector fields
$u^\mu$ and $m^\mu$, differing by the spacelike vector field $v^\mu := u^\mu -
m^\mu$, from \eqref{eq:lim_exp} and \eqref{eq:lim_vort}, their Galilean expansion and vorticity tensors
differ by
\begin{equation}
  \label{eq:diff_kin}
  \uTheta{m}_{ij} - \uTheta{u}_{ij}
  = -D_{(i} v_{j)} \; , \quad
  \uOmega{m}_{ij} - \uOmega{u}_{ij}
  = -\partial_{[i} v_{j]} - \ator_{[i} v_{j]} \; .
\end{equation}

Finally, to later assess the conformal invariance of the Galilean Weyl
tensor, we need to determine how the Galilean kinematical quantities
change under conformal transformation.  From~\eqref{eq:conf_Lor_kin_vars} for the
transformation of the Lorentzian kinematical quantities, as well as
the transformation of the spatial connection and spatial Ricci tensor,
we obtain
\begin{subequations} \label{eq:conf_Gal_kin_vars}
\begin{align}
  \theta
  &\overset{\varphi}{\rightarrow} \e^{-\varphi} \left(\theta  + 3\,  \Lie{u} \varphi \right) \; , \quad
  \sigma_{ij}
    \overset{\varphi}{\rightarrow} \e^{\varphi} \sigma_{ij} \; , \quad
  \Omega_{ij}
    \overset{\varphi}{\rightarrow} \e^{\varphi} \Omega_{ij} \; , \quad
  \ator_{i}
    \overset{\varphi}{\rightarrow} \ator_{i} - D_i \varphi \; ,  \\
  \iR{h\,}{\Gamma}^k_{ij}
  &\overset{\varphi}{\rightarrow} \iR{h\,}{\Gamma}^k_{ij} + 2 h^k{}_{(i} D_{j)} \varphi
    - h_{ij} D^k \varphi \; , \\
  \CR_{ij}
  &\overset{\varphi}{\rightarrow} \CR_{ij} - D_i D_j \varphi + D_i \varphi D_j \varphi
    - \left(\Delta \varphi + D_k \varphi D^k \varphi \right) h_{ij} \; ,
\end{align}
\end{subequations}
where $\iR{h\,}{\Gamma}^k_{ij}$ are the Levi-Civita coefficients of the
spatial metric.

\section{The Galilean Weyl tensor}
\label{sec:Gal_Weyl_conf}

In this section we show how to define a Galilean Weyl tensor that is
manifestly conformally invariant.  We first provide a definition
involving a spacetime connection, then we provide a definition
involving the spatial connection and the kinematical quantities.  Both
approaches are off-shell.

\subsection{Spacetime approach}

Here, we will discuss how to give a definition of a Galilean Weyl
tensor using a spacetime connection by consideration of the Galilean
limit.

\subsubsection{Definition}

To obtain a definition of the Galilean Weyl tensors (i.e., versions of
`the' Galilean Weyl tensor with different index placement), we
consider the Galilean limit of the corresponding Lorentzian versions,
and keep only the \acronym{LO} terms.

For simplicity, in the spacetime approach, we consider only a timelike
vector field $\tayll{U}^\mu$ such that its \acronym{LO} term is $\ulim^\mu$, i.e.,
the timelike vector field naturally defined by the Galilean limit
\eqref{eq:NR_limit}.  In other words, we consider a timelike vector
field $\tayll{U}^\mu$ for which the Galilean vorticity vanishes (see
\Cref{sec:Gal_kin_lim}).  Only in the spatial approach of \Cref{sec:Weyl_spatial_approach} will a general timelike
vector field be considered.

The \acronym{LO} terms of the electric and magnetic parts of the Lorentzian Weyl
tensor with respect to to $\tayll{U}^\mu$, defined by~\eqref{eq:def_E} and \eqref{eq:def_H}, are
\begin{align}
  \label{eq:E_LO}
  \Elim_{\mu\nu}
  := \tayl{-1}{E}_{\mu\nu}
  &=\frac{1}{2} \blim^\alpha{}_{\langle\mu}  \blim^\beta{}_{\nu\rangle} \tayl{0}{R}_{\alpha\beta}
    - \blim^\alpha{}_{\langle\mu} \blim^\beta{}_{\nu\rangle} \unicon\nabla_\alpha \atorlim_\beta
    + \atorlim_{\langle\mu} \atorlim_{\nu\rangle} \;, \nonumber \\
  &=\frac{1}{2} \blim^\alpha{}_{\langle\mu} \blim^\beta{}_{\nu\rangle} \iR{\tayl{0}{\nabla}}{R}_{\alpha\beta}
    - \blim^\alpha{}_{\langle\mu} \blim^\beta{}_{\nu\rangle} \unicon\nabla_\alpha \atorlim_\beta
    + \atorlim_{\langle\mu} \atorlim_{\nu\rangle} \; , \\
  \label{eq:H_LO}
  \Hlim_{\mu\nu}
  := \tayl{0}{H}_{\mu\nu}
  &= \frac{1}{2} \ueta{}_{\alpha\beta(\mu} h^{\beta\lambda}\tayl{0}{R}^\alpha{}_{\lambda\nu)\sigma} {\ulim}^\sigma
    \nonumber \\
  &= \frac{1}{2} \ueta{}_{\alpha\beta(\mu} h^{\beta\lambda} \iR{\tayl{0}{\nabla}}{R}^\alpha{}_{\lambda\nu)\sigma} {\ulim}^\sigma
    + \frac{1}{4} \ator^\alpha \, h^{\sigma\beta} \eta_{\alpha\sigma}{}_{(\mu} \, \Lie{\ulim} \blim_{\nu)\beta}\; .
\end{align}

Using~\eqref{eq:Weyl_E_H}, we can express the \acronym{LO} term of the full Weyl tensor in terms
of the \acronym{LO} terms of the electric and magnetic parts.  Considering the
different `versions' of the full Weyl tensor, we obtain
\begin{subequations}
\begin{align}
  \label{eq:13C_LO}
  \tayll{C}_{\alpha\beta\mu\nu}
  &= \frac{1}{\cc} \left[-2\tau_\alpha \tau_{[\mu} {\Elim}_{\nu]\beta}
    + 2\tau_\beta \tau_{[\mu} {\Elim}_{\nu]\alpha} \right] + \bigOl{0} , \\
  \tayll{C}^\alpha{}_{\beta\mu\nu}
  &= \frac{1}{\cc}\left[-2\tau_\beta \tau_{[\mu} {\Elim}_{\nu]}{}^\alpha \right] + \bigOl{0} , \\
  \label{eq:22C_LO}
  \tayll{C}^{\alpha\beta}{}_{\mu\nu}
  &= 4\left(\blim^{[\alpha}{}_{[\mu} {\Elim}{}^{\beta]}{}_{\nu]} - \ulim^{[\alpha}\tau_{[\mu} {\Elim}{}^{\beta]}{}_{\nu]} \right)
    - 2 \eta^{\alpha\beta\sigma} \tau_{[\mu} {\Hlim}_{\nu]\sigma} + \bigOl{1} , \\
  \tayll{C}^{\alpha\beta\mu}{}_{\nu}
  &= 2\left(
    h^{\mu[\alpha} {\Elim}{}^{\beta]}{}_{\nu}\
    - \blim^{[\alpha}{}_\nu {\Elim}{}^{\beta]\mu}
    + \tau_\nu \ulim^{[\alpha} {\Elim}{}^{\beta]\mu}
    \right)
    - \tau_\nu \eta^{\alpha\beta\sigma} h^{\mu\omega} {\Hlim}_{\sigma\omega} + \bigOl{1} , \\
  \tayll{C}^{\alpha\beta\mu\nu}
  &= 2\left(h^{\alpha[\mu} {\Elim}{}^{\nu]\beta} - h^{\beta[\mu} {\Elim}{}^{\nu]\alpha} \right) + \bigOl{1} ,
\end{align}
\end{subequations}
where we recall that indices on the Galilean objects are raised with
$h^{\mu\nu}$.

We see that the \acronym{LO} terms of the (0,4), the (1,3), and the (4,0)
versions of the Weyl tensor depend only on the \acronym{LO} term of the electric
part, while the \acronym{LO} terms of the (2,2) and the (3,1) versions depend on
the \acronym{LO} terms of both the electric and magnetic parts.  In this sense,
the (2,2) and the (3,1) Weyl tensors carry more geometric information
than the other versions and for this reason they are more suited to
define a Galilean Weyl tensor.  This peculiarity does not occur for
the electric and magnetic parts themselves, for which the \acronym{LO} terms
carry the same information regardless of the limit being taken with
upper or lower indices.  In this sense, \emph{what carries all the
  information on the Weyl tensor at \acronym{LO} are its electric and magnetic
  parts}.

Note also that for the different versions of the Weyl tensor, the \acronym{LO}
is \emph{different}: the (0,4) and the (1,3) versions have a non-vanishing
$\frac{1}{\cc}$ coefficient, while the others start at order $\cc^0$.
(In this sense, the (0,4) and (1,3) Weyl tensors diverge in the
Galilean $c \to \infty$ limit.)

There is, however, an issue in the \acronym{LO} expressions given so far: the
Ricci and Riemann tensors involved in these formulae are not defined
with respect to a boost-invariant connection, but with respect to the
zeroth order connection~\eqref{eq:Gamma_0}.  It is therefore preferable to express
them in terms of the unique boost-invariant connection~\eqref{eq:unique_A_con}.  Using~\eqref{eq:Gamma_0_to_uni}
to compute $\iR{\tayl{0}{\nabla}}{R}^\alpha{}_{\beta\mu\nu} - \unicon R^\alpha{}_{\beta\mu\nu}$,
we obtain
\begin{align}
  \Elim_{\mu\nu}
    &=\frac{1}{2} \blim^\alpha{}_{\langle\mu}  \blim^\beta{}_{\nu\rangle} \unicon{R}_{\alpha\beta}
      - \blim^\alpha{}_{\langle\mu}  \blim^\beta{}_{\nu\rangle} \unicon\nabla_\alpha \atorlim_\beta
      + \frac{3}{4} \atorlim_{\langle\mu} \atorlim_{\nu\rangle} \;, \\
  \Hlim_{\mu\nu}
    &= \frac{1}{2} \ueta{}_{\alpha}{}^\beta{}_{(\mu} \unicon{R}^\alpha{}_{\beta\nu)\sigma} \ulim^\sigma
      + \frac{1}{4} \ator^\alpha\, \eta_{\alpha}{}^{\beta}{}_{(\mu} \, \Lie{\ulim} \blim_{\nu)\beta}\; .
\end{align}

We can now give a definition for the off-shell Galilean Weyl tensor,
its electric part, and its magnetic part.

\begin{definition}[Galilean Weyl tensor, spacetime version]
  \label{def:Gal_Weyl_off_shell_restricted}
  Let $(\tau_\mu, h^{\mu\nu})$ be a Galilean structure, $(u^\mu, \ukappa{}_{\mu\nu})$
  be a timelike vector field and a choice of spatially vanishing
  Coriolis field, i.e\ $\kappa^{\mu\nu} = 0$, and $\unicon\nabla$ be the unique
  boost-invariant connection~\eqref{eq:unique_A_con} constructed from $(\tau_\mu, h^{\mu\nu}, u^\mu,
  \ukappa{}_{\mu\nu})$.  The (off-shell) \emph{Galilean electric and
    magnetic Weyl tensor parts} with respect to $u^\mu$ are defined,
  respectively, as
  \begin{align}
    \label{eq:E_LO_uni}
    {E}_{\mu\nu}
    &:=\frac{1}{2} \ub{}^\alpha{}_{\langle\mu}  \ub{}^\beta{}_{\nu\rangle} \unicon{R}_{\alpha\beta}
      - \ub{}^\alpha{}_{\langle\mu}  \ub{}^\beta{}_{\nu\rangle} \unicon\nabla_\alpha \uator{}_\beta
      + \frac{3}{4} \uator{}_{\langle\mu} \uator{}_{\nu\rangle} \; , \\
    \label{eq:H_LO_uni}
    {H}_{\mu\nu}
    &:= \frac{1}{2} \ueta{}_{\alpha\beta(\mu} h^{\beta\lambda} \unicon{R}^\alpha{}_{\lambda\nu)\sigma} {u}^\sigma
      + \frac{1}{4} \uator{}_\alpha \eta^{\alpha\beta}{}_{(\mu} \, \Lie{u} \! \ub{}_{\nu)\beta}
      \; .
  \end{align}
  The \emph{(2,2) Galilean Weyl tensor} is defined as
  \begin{equation}
    \label{eq:22C_LO_uni}
    C^{\alpha\beta}{}_{\mu\nu}
    := 4\left(\ub{}^{[\alpha}{}_{[\mu} {E}^{\beta]}{}_{\nu]}
        - u^{[\alpha}\tau_{[\mu} {E}^{\beta]}{}_{\nu]} \right)
      - 2 \eta^{\alpha\beta\sigma} \tau_{[\mu} {H}_{\nu]\sigma} \; .
  \end{equation}
\end{definition}

\begin{proposition}
  The Galilean Weyl tensor ${C}^{\alpha\beta}{}_{\mu\nu}$, its electric part
  ${E}_{\mu\nu}$, and its magnetic part ${H}_{\mu\nu}$ are invariant under
  Galilean conformal transformations.
\end{proposition}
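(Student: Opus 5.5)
The plan is to obtain the invariance from the Lorentzian one rather than by brute-force computation. Then, independently, I would check it directly on the Galilean objects, because the definition is posed off-shell and for an arbitrary Galilean structure.

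\emph{Step 1 (realisation by a limit).} Given $(\tau_\mu,h^{\mu\nu})$ with $\tau\wedge\dd\tau=0$, a timelike $u^\mu$ and a Coriolis field with $\kappa^{\mu\nu}=0$, I would build a one-parameter family of Lorentzian metrics of the form~\eqref{eq:NR_limit} with $\ulim^\mu=u^\mu$. The field $\phi$ is read off from the spatially vanishing $\kappa_{\mu\nu}=\tau_{[\mu}(\ub{}^\alpha{}_{\nu]}\partial_\alpha\phi-2\phi\uator_{\nu]}-\ub{}_{\nu]\alpha}k^{\alpha\beta}\uator_\beta)$ after choosing $k^{\mu\nu}=0$. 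The only subtle point is whether every $\kappa_{\mu\nu}=\tau_{[\mu}\kappa_{\nu]}$ is of this form. If it is not, I would instead allow an $\bigOl{1}$ shift of $U^\mu$, or more generally a $\tau_{(\mu}A_{\nu)}$ correction in $g_{\mu\nu}$ at order $\cc^0$. By the computation leading to~\eqref{eq:E_LO_uni}–\eqref{eq:H_LO_uni}, the leading orders $\Elim_{\mu\nu}$ and $\Hlim_{\mu\nu}$ then coincide with $E_{\mu\nu}$ and $H_{\mu\nu}$. Moreover, $C^{\alpha\beta}{}_{\mu\nu}$ is the leading order of the Lorentzian $(2,2)$ Weyl tensor by~\eqref{eq:22C_LO}.

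\emph{Step 2 (transport of conformal transformations).} A Lorentzian conformal transformation $g\mapsto\e^{2\varphi}g$ with $\cc$-independent $\varphi$ maps the ansatz~\eqref{eq:NR_limit} to one of the same form. It induces $\tau\mapsto\e^\varphi\tau$, $h\mapsto\e^{-2\varphi}h$, $\ulim\mapsto\e^{-\varphi}\ulim$, and $\phi$ unchanged, which is exactly~\eqref{eq:conf_trans_Gal}. The induced Coriolis field must transform as $\kappa\mapsto\e^\varphi\kappa$ while staying spatially vanishing; I would verify this directly from the formula for $\kappalim$ together with $\uator_i\mapsto\uator_i-D_i\varphi$.

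\emph{Step 3 (conclusion from Lorentzian invariance).} The Lorentzian $C^{\alpha\beta}{}_{\mu\nu}$ is exactly conformally invariant. Since $U^\mu$ rescales as $\e^{-\varphi}U^\mu$, the parts defined in~\eqref{eq:def_E} and~\eqref{eq:def_H} pick up only the weights fixed by index placement. Taking leading orders order by order in $\cc$ gives the invariance of $E_{\mu\nu}$, $H_{\mu\nu}$ and $C^{\alpha\beta}{}_{\mu\nu}$ with those weights.

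\textbf{Main obstacle.} The delicate step is making sure the Galilean Coriolis field assigned after the transformation is the one used in Definition~\ref{def:Gal_Weyl_off_shell_restricted}, i.e.\ that the limit commutes with the construction of $\unicon\nabla$. As a cross-check that avoids this issue, I would compute directly. From~\eqref{eq:unique_A_con}, the transformation of $\unicon\Gamma$ is $\delta\unicon\Gamma^\alpha_{\mu\nu}=2\delta^\alpha_{(\mu}\partial_{\nu)}\varphi-h^{\alpha\sigma}\partial_\sigma\varphi\,\ub{}_{\mu\nu}+\ldots$ plus $\tau$-terms. Inserting this into the spatial Ricci part, the $\varphi$-dependent pieces $D_{\langle\mu}D_{\nu\rangle}\varphi$ and $D_{\langle\mu}\varphi D_{\nu\rangle}\varphi$ should be cancelled by the $-\nabla\uator$ and $\tfrac34\uator\uator$ terms. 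This is the Galilean counterpart of the cancellation of $\CR_{\langle ij\rangle}$ against $D_{\langle i}a_{j\rangle}$ in~\eqref{eq:E_kin_vars}, using~\eqref{eq:conf_Gal_kin_vars}. The check for $H_{\mu\nu}$ is analogous: the second term in~\eqref{eq:H_LO_uni}, which is built from $\Theta_{\mu\nu}$, should cancel the shift of the curvature term.
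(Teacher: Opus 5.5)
Your Steps~2--3 are the paper's proof. That proof says only that the Lorentzian $E_{\mu\nu}$, $H_{\mu\nu}$ are conformally invariant and that the limit of $g_{\mu\nu}\mapsto\e^{2\varphi}g_{\mu\nu}$ is a Galilean conformal transformation. The two issues you add are genuine points that this one-liner passes over: realising arbitrary Galilean data by a limit, and which Coriolis field is attached after the transformation. Your handling of them, however, does not close them.

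Neither fallback in Step~1 helps. An $\bigOl{1}$ shift of $U^\mu$ changes neither the connection nor the leading orders of $E_{\mu\nu}$, $H_{\mu\nu}$. A $\tau_{(\mu}A_{\nu)}$ term at order $\cc^0$ is just a boost of $\ulim$, since $\ub{u}_{\mu\nu}+2\tau_{(\mu}A_{\nu)}=\ub{u-A}_{\mu\nu}-A^2\tau_\mu\tau_\nu$, so it destroys $\ulim=u$.

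The check announced in Step~2 also fails. With $\phi\mapsto\phi$, $k^{\mu\nu}\mapsto\e^{-2\varphi}k^{\mu\nu}$ and $\ator_i\mapsto\ator_i-D_i\varphi$, the formula for $\kappalim_{\mu\nu}$ gives $\kappalim_{\mu\nu}\mapsto\e^{\varphi}\kappalim_{\mu\nu}+\e^{\varphi}\tau_{[\mu}\bigl(2\phi\,\partial_{\nu]}\varphi+\blim_{\nu]\alpha}k^{\alpha\beta}\partial_\beta\varphi\bigr)$. This is not $\e^{\varphi}\kappalim_{\mu\nu}$ unless $\phi$ and $k^{\mu\nu}$ vanish.

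The missing idea is that $E_{\mu\nu}$ and $H_{\mu\nu}$ in Definition~\ref{def:Gal_Weyl_off_shell_restricted} do not depend on the purely temporal Coriolis field at all. Replacing $\kappa_{\mu\nu}$ by $\kappa_{\mu\nu}+\tau_{[\mu}K_{\nu]}$, with $K$ spacelike, changes $\unicon\Gamma^\alpha_{\mu\nu}$ by $\tau_\mu\tau_\nu K^\alpha$. Since $\unicon\nabla$ is torsion-free and $\tau_\sigma K^\sigma=0$, the Riemann tensor changes by $2\unicon\nabla_{[\mu}(\tau_{\nu]}\tau_\beta K^\alpha)$.

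Every term of this carries an undifferentiated $\tau_\beta$, except $2\tau_{[\nu}(\unicon\nabla_{\mu]}\tau_\beta)K^\alpha$. The $\tau_\beta$ terms are annihilated by the spatial projection of $\beta$ in $E$ and by the contraction of $\beta$ with $h$ in $H$. The exceptional term is proportional to $\tau_\nu$ after the Ricci contraction $\alpha=\mu$. Since $h^{\lambda\beta}\unicon\nabla_\mu\tau_\beta=\tfrac12\tau_\mu\ator^\lambda$, its contraction with $h^{\lambda\beta}$ is $\tfrac12(\tau_\nu\tau_\mu-\tau_\mu\tau_\nu)\ator^\lambda K^\alpha=0$. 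Finally, $\unicon\nabla_\alpha\ator_\beta$ shifts only by $-\tau_\alpha\tau_\beta K^\sigma\ator_\sigma$, which the spatial projection in $E$ kills.

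Hence you can realise any $(\tau_\mu,h^{\mu\nu},u^\mu)$ exactly by $g_{\mu\nu}=-\tfrac{1}{\cc}\tau_\mu\tau_\nu+\ub{u}_{\mu\nu}$, so that $\phi=0$, $k^{\mu\nu}=0$, $\ulim=u$ and $\kappalim_{\mu\nu}=0$. The conformal map sends this to a family of the same type, and Steps~2--3 then go through for every admissible $\kappa_{\mu\nu}$.

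If you pursue the direct cross-check instead, note that the $\ator$-term of \eqref{eq:unique_A_con} halves the coefficient you wrote. Exactly, $\unicon\Gamma^\alpha_{\mu\nu}$ shifts by $2\delta^\alpha_{(\mu}\partial_{\nu)}\varphi-u^\alpha\tau_{(\mu}\partial_{\nu)}\varphi-\tfrac12\ub{}_{\mu\nu}h^{\alpha\sigma}\partial_\sigma\varphi$.

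The premise of Step~3 is also misstated. The Lorentzian $(2,2)$ Weyl tensor has weight $\e^{-2\varphi}$; only $C^\alpha{}_{\beta\mu\nu}$ is invariant. The Galilean $C^{\alpha\beta}{}_{\mu\nu}$ of \eqref{eq:22C_LO_uni} likewise rescales by $\e^{-2\varphi}$, as do $E^\beta{}_\nu$ and $\eta^{\alpha\beta\sigma}\tau_\mu$, while $E_{\mu\nu}$ and $H_{\mu\nu}$ have weight zero. For $C^{\alpha\beta}{}_{\mu\nu}$ the proposition is therefore true only in your sense of invariance `with those weights'.
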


\begin{proof}
  Because the Lorentzian electric and magnetic parts
  $\uE{\tayll{U}}_{\mu\nu}$ and $\uH{\tayll{U}}_{\mu\nu}$ are invariant under
  Lorentzian conformal transformations, and because the Galilean limit
  of a Lorentzian conformal transformation is a Galilean conformal
  transformation, their \acronym{LO} terms are automatically invariant
  under Galilean conformal transformations.
\end{proof}

Note that when $\dd\tau = 0$, the definition~\eqref{eq:22C_LO_uni} reduces to
\begin{equation}
  \label{eq:Weyl_d_tau_0}
  C^{\alpha\beta}{}_{\mu\nu}
  = h^{\sigma\beta} R^\alpha{}_{\sigma\mu\nu}
  - 2 \delta^{[\alpha}{}_{[\mu} R_{\nu]\sigma} h^{\beta]\sigma}
  + \frac{1}{3} \delta^{\alpha}{}_{[\mu} \delta^\beta{}_{\nu]} R_{\sigma\omega}h^{\sigma\omega}\;.
\end{equation}

Because all information on the Galilean Weyl tensor is contained in
the electric and magnetic parts, and because there is no unique way to
define a `full Galilean Weyl tensor' from the electric and magnetic
parts, for the rest of this paper we will only discuss these two
parts, and not anymore consider the full object~\eqref{eq:22C_LO_uni}.

\subsubsection{Discussion}
\label{sec:discuss_spacetime_Weyl}

It is important to note that the precise formulae for the electric and
magnetic parts depend on which spacetime connection we choose to
express the Ricci and Riemann tensors.  We chose the connection
derived in \Cref{sec:unique_A_con} because we wanted a boost-invariant connection that
does not require the introduction of an additional structure.  As
discussed in \Cref{app:boost_inv_extra}, by introducing an extra structure, e.g., a mass
gauge field $M_\mu$, other boost-invariant connections could be
considered, changing essentially the last terms in the formulae~\eqref{eq:E_LO_uni}
and~\eqref{eq:H_LO_uni}, while changing nothing in the formula~\eqref{eq:22C_LO_uni} for the (2,2)
Galilean Weyl tensor.  In any case, regardless of the connection that
is chosen, the formulae will always be conformally invariant and
equivalent.

For instance, by introducing the mass gauge field $M_\mu$ and the
boost-invariant fields $\bar u^\mu$, $\ub{\bar u}_{\mu\nu}$, and
$\bar\ator_\mu$ (see \Cref{app:con_Mass} for the definitions), we can choose the
connection to be either \emph{torsionful and compatible}, taking the form
\begin{equation}
  \bar\Gamma^\alpha_{\mu\nu}
  = \ucheckGamma{\bar u}^\alpha_{\mu\nu} + 2\tau_{(\mu} \ukappa{\bar u}_{\nu)}{}^\alpha
    + \ub{\bar u}^\alpha{}_{(\mu} \bar \ator_{\nu)}
    - \ub{\bar u}_{\mu\nu} \bar \ator^\alpha
    + \ub{\bar u}^\alpha{}_{[\mu} \bar \ator_{\nu]}
    + \ub{\bar u}^\alpha \tau_{[\mu} \bar \ator_{\nu]} \; ,
\end{equation}
or \emph{torsion-free and conformally compatible} (see definition~\eqref{eq:conf_connection}),
taking the form
\begin{equation}
  \bar\Gamma^\alpha_{\mu\nu}
  = \ucheckGamma{\bar u}^\alpha_{\mu\nu} + 2\tau_{(\mu} \ukappa{\bar u}_{\nu)}{}^\alpha
    + 2 \ub{\bar u}^\alpha{}_{(\mu} \bar \ator_{\nu)}
    - \ub{\bar u}_{\mu\nu} \bar \ator^\alpha
    + \, \bar u^\alpha \tau_{(\mu} \bar \ator_{\nu)} \; .
\end{equation}
In these two cases, the electric and magnetic parts become simply
\begin{equation}
  {E}_{\mu\nu}
  = \frac{1}{2} \ub{\bar u}^\alpha{}_{\langle\mu} \ub{\bar u}^\beta{}_{\nu\rangle} \bar{R}_{\alpha\beta} \;, \quad
  {H}_{\mu\nu}
  = \frac{1}{2} \ueta{}_{\alpha\beta(\mu} h^{\beta\lambda} \bar{R}^\alpha{}_{\lambda\nu)\sigma} {\bar u}^\sigma \; .
\end{equation}
These formulae are equivalent to those from \Cref{def:Gal_Weyl_off_shell_restricted}, but written with
respect to another spacetime connection.  They have the merit of being
simpler, but at the cost of introducing extra structure.

Another clear disadvantage of \Cref{def:Gal_Weyl_off_shell_restricted} is that the objects which appear
therein are difficult to interpret, regardless of the spacetime
connection we choose.  In particular, it is not clear how the spatial
curvature can be obtained from the spacetime connection, unless $\dd\tau
= 0$.  For this reason, a more enlightening way of defining the
electric and magnetic parts of the Galilean Weyl tensor would be to
use only spacelike quantities, like the spatial curvature and
kinematical quantities.  The clear advantage of such a `spatial
approach' is that there is a unique spatial connection that can be
constructed from the Galilean structure, namely the Levi-Civita
connection of the (pull-back) Riemannian metric $h_{ij}$ on $\Sigma$.
Therefore, there will be a unique way of representing the electric and
magnetic parts.  We pursue this approach in the next section.

Finally, note that \Cref{def:Gal_Weyl_off_shell_restricted} defines the electric and magnetic parts with
respect to an irrotational observer, as imposed by the condition
$\kappa^{\mu\nu} = 0$.  In principle, it would be possible to obtain a
definition for a general observer by taking the limit with respect to
a timelike vector field $\tayll{U}$ with a \acronym{LO} term different from
$\ulim$, which would result in more complicated formulae.  We will
provide such a general definition only in the spatial approach in the
next section.

Note also that when lowering the second index of the $\dd\tau = 0$ form \eqref{eq:Weyl_d_tau_0}
of the Galilean Weyl tensor with $\ub{}_{\mu\nu}$, the resulting object
fulfills all the conditions demanded in ref.~\cite{Dewar.Weatherall:2018} of a Galilean Weyl
tensor, apart from spatial flatness.

\subsection{Spatial approach}
\label{sec:Weyl_spatial_approach}

Here, we will discuss a `purely spatial' definition of a Galilean Weyl
tensor.

\subsubsection{Definition}

From the expressions~\eqref{eq:E_kin_vars} and~\eqref{eq:H_kin_vars} of the electric and magnetic parts of
the Lorentzian Weyl tensor in terms of the kinematical quantities, and
the limit of kinematical quantities presented in \Cref{sec:limit_kin_vars}, we obtain the
following definition:

\begin{definition}[Galilean Weyl tensor, spatial version]
  \label{def:Gal_Weyl_off_shell_spatial}
  Let $(\tau_\mu, h^{\mu\nu})$ be a Galilean structure such that $\tau_\mu$ is
  foliation forming with leaves $\Sigma$ and with time dilation vector
  field $\ator^\mu$.  Given a timelike vector field $u^\mu$ with Galilean
  shear $\sigma_{\mu\nu}$ and vorticity $\Omega_{\mu\nu}$, the \emph{electric and magnetic
    parts of the Galilean Weyl tensor} on $\Sigma$ with respect to $u^\mu$ are
  defined as
  \begin{align}
    E_{ij}
    &:= \frac{1}{2} \left(\CR_{\langle ij \rangle}
      - D_{\langle i} \ator_{j\rangle}
      + \ator_{\langle i} \ator_{j\rangle} \right) , \\
    \label{eq:def_H_Gal}
    H_{ij}
    &:= \eta_{ab\langle i}\left[D^a \left(\sigma^b{}_{j\rangle} + \Omega^b{}_{j\rangle} \right)
      + \ator_{j\rangle} \Omega^{ab}\right] ,
  \end{align}
  where $D_i$ and $\CR_{ij}$ are, respectively, the Levi-Civita
  connection of the spatial metric $h_{ij}$ and its Ricci tensor.
\end{definition}

Note that in the spatial approach, we have only defined the electric
and magnetic parts of the Galilean Weyl tensor, not any `spatial
Galilean Weyl tensor' as a combined object.

By a direct computation, one can show:
\begin{proposition}
  The (spatial) electric part ${E}_{ij}$ and magnetic part ${H}_{ij}$
  of the Galilean Weyl tensor are invariant under Galilean conformal
  transformations~\eqref{eq:conf_Gal_kin_vars}.
\end{proposition}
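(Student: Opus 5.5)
The plan is to verify invariance by direct substitution of the transformation laws \eqref{eq:conf_Gal_kin_vars}, treating $E_{ij}$ and $H_{ij}$ separately. In both cases I first count conformal weights to show that the homogeneous parts scale with $\e^{0\cdot\varphi}$. I then show that the inhomogeneous terms, which contain derivatives of $\varphi$, cancel. Throughout I use $h_{ij}\to\e^{2\varphi}h_{ij}$ and $h^{ij}\to\e^{-2\varphi}h^{ij}$. For the spatial volume form I use $\eta_{ijk}\to\e^{3\varphi}\eta_{ijk}$, which follows from $\eta\to\e^{4\varphi}\eta$ and $u\to\e^{-\varphi}u$. I also use $D\to D+\delta\Gamma$ with $\delta\Gamma^k_{ij}=2h^k{}_{(i}D_{j)}\varphi-h_{ij}D^k\varphi$. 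A useful preliminary remark is that the angular brackets are conformally covariant: the traceless part of a tensor is unchanged if the tensor is shifted by a multiple of $h_{ij}$, and only the overall factor changes. Hence every pure-trace term produced by the transformation can be discarded from the start.

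For $E_{ij}$, all three terms have weight zero. The transformed expression is the traceless part of $\CR_{ij}-D_iD_j\varphi+D_i\varphi D_j\varphi$, minus the traceless part of $D'_i(\ator_j-D_j\varphi)$, plus the traceless part of $(\ator_i-D_i\varphi)(\ator_j-D_j\varphi)$. Expanding $D'_i\ator'_j=D_i\ator'_j-\delta\Gamma^k_{ij}\ator'_k$ and dropping the $h_{ij}$ piece, one finds
\begin{equation}
D'_{\langle i}\ator'_{j\rangle}=D_{\langle i}\ator_{j\rangle}-D_{\langle i}D_{j\rangle}\varphi-2D_{\langle i}\varphi\,\ator_{j\rangle}+2D_{\langle i}\varphi D_{j\rangle}\varphi .
\end{equation}
Adding the three shifted pieces, the second-derivative terms cancel. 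The mixed terms $2\ator D\varphi$ also cancel, and the quadratic terms sum to $1-2+1=0$. This shows $E'_{ij}=E_{ij}$.

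For $H_{ij}$, let $X_{bj}:=\sigma_{bj}+\Omega_{bj}$, so that $X\to\e^{\varphi}X$ with lower indices. The weights then combine as follows: $\eta$ contributes $+3$, the two raised indices contribute $-4$, and $X$ contributes $+1$. Hence $\eta_{abi}D^aX^b{}_j$ and $\eta_{abi}\ator_j\Omega^{ab}$ both have weight $0$ up to inhomogeneous terms. These inhomogeneous terms are of three kinds. The first is $D_a\varphi\,X_{bj}$, coming from differentiating $\e^{\varphi}$. The second comes from $\delta\Gamma$ acting on the two indices of $X$. The third is $-D_j\varphi\,\Omega^{ab}$, coming from the shift of $\ator$. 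The $\delta\Gamma$ contributions along $\eta$'s indices $a,b$ either vanish by antisymmetry of $\eta$ or give terms proportional to $D_b\varphi X^b{}_j$ and $D^b\varphi X_{bj}$ contracted into $\eta$. I would collect everything into the form
\begin{equation}
\eta_{ab\langle i}\bigl(\alpha\, D^a\varphi\, X^b{}_{j\rangle}+\beta\, D_{j\rangle}\varphi\, X^{ab}+\gamma\, \delta^a_{j\rangle}D_c\varphi X^{cb}\bigr),
\end{equation}
and then use symmetry properties to eliminate terms. Tracelessness of $\sigma$ and antisymmetry of $\Omega$ reduce the expression: in the $\beta$ and $\gamma$ terms only $\Omega$ survives. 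The symmetric-traceless projection kills any term of the form $\eta_{abi}(\cdot)^{ab}{}_j$ that is proportional to a dual vector.

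The main obstacle is this final bookkeeping for $H$. I need to show that the $\Omega$-pieces coming from $\delta\Gamma$ and the $\ator$-shift cancel against each other and against the $D^a\varphi\,\Omega^b{}_j$ term. My first attempt will use the three-dimensional identity for the antisymmetric $\Omega$: writing $\Omega_{ab}=\eta_{abc}\omega^c$, every term becomes a combination of $D_{(i}\varphi\,\omega_{j)}$ and $h_{ij}D_k\varphi\,\omega^k$. The latter is removed by the traceless projection, and I expect the coefficients of the former to sum to zero. The $\sigma$-pieces should cancel separately: the derivative-of-$\e^{\varphi}$ term is matched by the $\delta\Gamma$ term $2\delta^k{}_{(a}D_{b)}\varphi\,\sigma_{kj}$, and the leftover terms are pure trace or antisymmetric in $ij$.
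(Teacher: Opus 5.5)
Your route is the paper's own (the paper only says ``by a direct computation''), and your treatment of $E_{ij}$ is complete and correct. For $H_{ij}$, however, you stop at ``I expect the coefficients to sum to zero'', and part of the mechanism you sketch is wrong, so that half of the statement is not yet established.

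Keep $X_{dj}:=\sigma_{dj}+\Omega_{dj}$ with lower indices (weight $+1$). Write the first term as $\eta^{cd}{}_i D_c X_{dj}$ with $\eta^{cd}{}_i:=h^{ca}h^{db}\eta_{abi}$ (weight $-1$). Then the $\delta\Gamma$ term on the $\eta$-contracted index, $\eta^{cd}{}_i\,\delta\Gamma^k_{cd}X_{kj}$, vanishes identically because $\delta\Gamma^k_{cd}$ is symmetric in $cd$. In particular, your term $2\delta^k{}_{(a}D_{b)}\varphi\,\sigma_{kj}$ is killed by $\eta$, so it cannot be what cancels the derivative-of-$\mathrm{e}^{\varphi}$ term. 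That term, $D_c\varphi\,X_{dj}$, is instead cancelled, for $\sigma$ and $\Omega$ at once, by the $\delta^k_j D_c\varphi$ piece of $\delta\Gamma^k_{cj}$ acting on the free index $j$. With $\varphi_c:=D_c\varphi$, the change of $\eta^{cd}{}_i D_c X_{dj}$ is
\[
  \eta^{cd}{}_i\bigl(\varphi_c X_{dj} - \delta\Gamma^k_{cj}X_{dk}\bigr)
  = \varphi_j\,\eta^{cd}{}_i\,\Omega_{cd} + \eta_j{}^{d}{}_i\,\varphi^k X_{dk} \, .
\]
Here the $\sigma$-part of $-\varphi_j\eta^{cd}{}_iX_{dc}$ drops by symmetry of $\sigma$, so tracelessness plays no role. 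The last term is antisymmetric in $ij$, hence removed by $\langle ij\rangle$. Since $\eta_{abi}\Omega^{ab}$ has weight zero, the shift $\ator_j\to\ator_j-\varphi_j$ contributes exactly $-\varphi_j\,\eta^{cd}{}_i\Omega_{cd}$, and so $H_{ij}$ is invariant.

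In your $\alpha,\beta,\gamma$ bookkeeping, this means $\alpha=0$ identically and the two $\beta$ contributions are $+1$ and $-1$. The $\gamma$-type term has the form $\eta_{jbi}V^b$ and dies by antisymmetry in $ij$, even though $\sigma$ does contribute to it, so ``only $\Omega$ survives'' is not right there. The detour through $\Omega_{ab}=\eta_{abc}\omega^c$ is unnecessary. Finally, the $\beta$ cancellation depends on reading $\Omega^b{}_j$ as $h^{bc}\Omega_{cj}$. With the other index ordering, the two $\varphi_j\,\eta\,\Omega$ terms would add rather than cancel, so this relative sign is exactly what has to be checked rather than expected.
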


Because $D_i$, $\CR_{ij}$ and $\ator_i$ are independent of the choice
of observer, the electric part is independent as well.  This is not
the case for the magnetic part.

\subsubsection{Relation to the spacetime approach}

The spatial \Cref{def:Gal_Weyl_off_shell_spatial} of the electric and magnetic parts of the Galilean
Weyl tensor is equivalent to the (pull-back of the) spacetime \Cref{def:Gal_Weyl_off_shell_restricted},
because they both correspond to the \acronym{LO} terms of the electric and
magnetic parts of the Lorentzian Weyl tensor.  However, while there
was the freedom of the choice of spacetime connection in \Cref{def:Gal_Weyl_off_shell_restricted}, there is
no such freedom in \Cref{def:Gal_Weyl_off_shell_spatial}.

To obtain the spacetime \Cref{def:Gal_Weyl_off_shell_restricted} from the spatial \Cref{def:Gal_Weyl_off_shell_spatial} without consideration
of the Galilean limit, one should express the spatial covariant
derivative and the spatial Ricci tensor in terms of the chosen
spacetime connection.  When $\dd\tau = 0$, that relation is similar to
the relativistic one~\eqref{eq:def_D_spatial} (see, e.g., \cite{Vigneron:2021}): the spatial Levi-Civita
connection is simply the restriction of the spacetime connection.
However, this is not the case when $\dd\tau\ne0$.  For instance,
restriction of the unique boost-invariant connection $\unicon\nabla_\mu$ does
not lead to a connection $D_i$ that is compatible with the spatial
Riemannian metric.  A more elaborate definition is required, which we
do not aim to study in this paper as we already established the
equivalence between \Cref{def:Gal_Weyl_off_shell_restricted} and \Cref{def:Gal_Weyl_off_shell_spatial}.

The fact that, when $\dd\tau \ne 0$, there arises an ambiguity in the
spacetime connection that should be considered in the Galilean limit,
and no such ambiguity arises for the spatial connection, may be taken
as an indication that, in the $\dd\tau \ne 0$ case, to understand the
geometry and the physics of a Galilean invariant theory it is easier
\emph{not} to work with a spacetime representation \emph{à la} standard
Newton--Cartan geometry, but to work instead with a spatial
representation, closer to the classical Newtonian formulation.

\section{Discussion}
\label{sec:disc}

The approach we followed in \Cref{sec:Gal_Weyl_conf} for the definition of a Galilean Weyl
tensor---or its electric and magnetic parts---is completely off-shell.
In particular, we did not impose the spatial curvature to be zero,
contrary to what has been done so far in the literature using Galilean
structures to define the Galilean Weyl tensor \cite{Ehlers.Buchert:2009, Dewar.Weatherall:2018, Dewar.Read:2020}, and
the literature using a more classical approach to the Newtonian limit
\cite{Ellis:1971, Ellis:2009, Kofman.Pogosyan:1995}.  This has two advantages.  First, since spatial
curvature is not preserved under conformal transformations, having
non-zero spatial curvature is the only way in which one can define a
conformally invariant Galilean Weyl tensor.  Second, the Weyl tensor
of Lorentzian structures is a geometric quantity defined independently
of any field equations.  It is therefore natural to require that a
Galilean version should also be defined geometrically.

Let us elaborate on the first point above.  One might wonder how it
can be, given what we have just said, that \textcite{Dewar.Read:2020} profess to show that
their \emph{on-shell} Gailean Weyl tensor is invariant under Galilean
conformal transformations.  The key to resolving this puzzle is to
note that the conformal rescalings chosen in ref.~\cite{Dewar.Read:2020} are \emph{not} the
same as the conformal rescalings used in this article (or in refs.~\cite{Schwartz.Read.Vigneron:2026,
  Wolf.Read.Vigneron:2024}): in fact, those scalings suffice to make the conformal
transformations considered in that article \emph{projective} transformations
(on which see, e.g., ref.~\cite{Malament:2012}).  And since the `Galilean Weyl
tensor' considered in refs.~\cite{Ehlers.Buchert:2009, Dewar.Weatherall:2018, Dewar.Read:2020} is in fact what is
known as the \emph{projective} Weyl tensor (see, e.g., ref.~\cite{Luebbe:2013}), which is
a \emph{projective} invariant, it is of course no surprise that that object
is invariant under those alternative transformations.

This, in turn, leads to a related puzzle: how can it be the case that,
by taking the Galilean limit of a relativistic Weyl tensor which is \emph{conformally}
invariant, refs.~\cite{Ehlers.Buchert:2009, Dewar.Weatherall:2018} obtain an object which is \emph{projectively}
invariant?  The answer to this question (setting aside the fact that
there are some mathematical slips in ref.~\cite{Ehlers.Buchert:2009} when it comes to
taking the limit) is that it is known from ref.~\cite{Luebbe:2013} that, in the
relativistic setting, the projective and conformal Weyl tensors agree
when one is working with an Einstein manifold (for which, by
definition, $g_{\mu\nu} \propto R_{\mu\nu}$).  But since when one goes on-shell in
Newton--Cartan theory one effectively has this condition (recall the
Newton--Cartan field equation $R_{\mu\nu} = 4\pi G \rho \tau_\mu \tau_\nu$), it is little
surprise that one obtains a projective invariant in the end.  Of
course, if one follows the approach of the present paper and remains
fully off-shell, what one arrives at for the Galilean Weyl tensor (as
in Definition \ref{eq:22C_LO_uni}) is \emph{not} a projective invariant (but \emph{is} invariant
under the `correct' Galilean conformal transformations\footnote{For a
  more general approach to Galilean conformal transformations which
  subsumes both choices of scaling, see ref.~\cite{March.Read:2026};
  this more general definition is also used in
  ref.~\cite{Adlam.Linnemann.Read:2025}.}).

In any case, let us return now to our off-shell definitions as
presented above.  From the discussion in the previous section, in
order to understand the Galilean Weyl tensors, it is sufficient to
consider only the electric and magnetic parts, since the full
spacetime Galilean Weyl tensor~\eqref{eq:22C_LO_uni} is uniquely given in terms of these
two parts.  (Of course, in the spatial definition, we have only
defined these parts separately.)

\subsection{Interpretation of the electric part}

As a consequence of our approach, the usual interpretation of the
electric part of the Weyl tensor as corresponding to the tidal tensor
fails for our Galilean version.  Indeed, our definition corresponds to
the \acronym{LO} term of the Lorentzian electric part, which is an order-$(-1)$
off-shell object, while the tidal tensor only appears at zeroth order,
which agrees with the \acronym{LO} term only on-shell, i.e., assuming the
Newton--Cartan equation.  Being the $(-1)$-order term, our definition
depends directly on the traceless part of the spatial Ricci tensor.
Therefore, to obtain the usual relation between the electric part and
the tidal tensor, a second electric Galilean Weyl tensor would need to
be introduced, corresponding to the next-to-leading-order.  We do not
provide such a definition in the present paper.

\enlargethispage{1.6\baselineskip}

\subsection{Interpretation of the magnetic part}
\label{sec:inter_mag}

Contrary to the electric part, the off-shell magnetic part of the Weyl
tensor depends on the choice of observer, as it depends explicitly on
the shear and vorticity tensors.  Additionally, as for the on-shell
approach (with the Newton--Cartan equation), it is the zeroth order
term of the Lorentzian magnetic part.  The classical result is that
the magnetic part vanishes in Newtonian gravity.  This result holds
on-shell with the Newton--Cartan equation.  As we show below,
off-shell this is not the case in general.

As shown in \Cref{sec:Gal_kin_lim}, the Galilean limit imposes that there exists a vector
field $w_i$ such that any vorticity tensor has the form
\begin{align}
  \label{eq:vort_decomp_for_magnetic_part}
  \Omega_{ij} = \partial_{[i} w_{j]} + \ator_{[i} w_{j]} \; ,
\end{align}
and the difference between the vorticities of two different observers
(see~\eqref{eq:diff_kin}) implies that there exist irrotational observers.

Contrary to the vorticity tensor, no constraint on the expansion
tensor arises in the Galilean limit.  In particular, it is not
guaranteed that there exists an observer with vanishing expansion
tensor, in particular for cosmological situations.  As a consequence,
the magnetic part does not vanish in general.  To understand in which
situation it does vanish, it is useful to consider the
scalar--vector--tensor (\acronym{SVT}) decomposition of symmetric
tensors.  It states that, in 3 dimensions, given a Riemannian metric
$h_{ij}$ and a symmetric tensor field $A_{ij}$, there exist a unique
scalar field $\chi$, vector field gradient $D_{(i} v_{j)}$, and
transverse-traceless (\acronym{TT}) tensor field $X_{ij}$ such that $A_{ij} = \chi
h_{ij} + D_{(i} v_{j)} + X_{ij}$.  The existence and uniqueness of
this decomposition require either a closed manifold, or specific
asymptotic conditions at infinity (e.g., asymptotic flatness) \cite{York:1973,
  Straumann:2008}.  The former corresponds to a cosmological setup, the
latter to an isolated system.  We consider one of these two situations
in the following, such that the \acronym{SVT} decomposition applies.

We apply the \acronym{SVT} decomposition to the expansion tensor
$\Theta_{ij}$, writing it as
\begin{equation}
  \label{eq:exp_decomp_for_magnetic_part}
  \Theta_{ij} = \chi h_{ij} + D_{(i} v_{j)} + X_{ij} \; .
\end{equation}
Inserting \eqref{eq:vort_decomp_for_magnetic_part} and \eqref{eq:exp_decomp_for_magnetic_part} into the definition \eqref{eq:def_H_Gal} of the magnetic part of
the Galilean Weyl tensor, one may show that it takes the form
\begin{align}
  \label{eq:H_with_decomp}
  \uH{}_{ij}
  &= \frac{1}{2}\eta^{ab}{}_{(i} \CR_{j)a} \left(v_b + w_b\right)
    + \eta^{ab}{}_{\langle i} \biggl[
      \frac{1}{2} D_a D_{j\rangle} \left(v_b - w_b\right)
      + D_a X_{j\rangle b}  \nonumber \\
  &\qquad\qquad\qquad + \frac{1}{2} D_a \left(\ator_{b} w_{j\rangle}\right)
      - \frac{1}{2} D_a\left(w_{b} \ator_{j\rangle}\right)
      + \ator_{j\rangle} \left(D_{a} w_b + \ator_{a} w_b\right)
    \biggr] .
\end{align}
Note that in this equation, the traceless part is only taken on the
indices $i$ and $j$, and not on the enclosed indices $a$, $b$ `in
between'.  In order to derive \eqref{eq:H_with_decomp}, one needs to express the commutator
of spatial covariant derivatives in terms of the spatial Riemann
tensor, and then express the Riemann tensor (which is possible in 3
dimensions) in terms of the metric and the Ricci tensor.

Again, in general, without any assumptions on the kinematical
quantities, the magnetic part \eqref{eq:H_with_decomp} does not vanish.  However, when
$\ator_i = 0$ and the Newton--Cartan equation is assumed, then
$\CR_{ij} = 0$, and for any observer we have \cite{Vigneron:2021}\footnote{This is a
  consequence of the mixed spatial-temporal projection of the
  Newton--Cartan equation.}
\begin{equation}
  \label{eq:flat_observer}
  D_i \chi = 0 \; , \quad D_i v_j = D_i w_j \; , \quad \Delta X_{ij} = 0 \; .
\end{equation}
The last condition can be derived from $\frac{1}{2} \partial_t h_{ij} =
\Theta_{ij}$ and $\partial_t \CR_{ij} = 0$, where $\partial_t$ is the time derivative
along the observer (i.e., timelike vector field) with expansion tensor
$\Theta_{ij}$.
\pagebreak

Using the boundary conditions needed for the \acronym{SVT}
decomposition theorem to hold, the last condition in~\eqref{eq:flat_observer} implies $0 =
\int_\Sigma \dd x^3 \sqrt{h} \, X^{ij} \Delta X_{ij} = -\int_\Sigma \dd x^3 \sqrt{h} \, D^k
X^{ij} D_k X_{ij}$. Then, using $D^k X^{ij} D_k X_{ij} \geq 0$, we obtain
$D_k X_{ij} = 0$.  Consequently, $X_{ij}$ is curl-free, i.e., $D_{[a}
X_{b]j} = 0$.  Therefore \eqref{eq:flat_observer} implies that the magnetic part vanishes
for any observer.

This result, obtained in the context of the Newton--Cartan equation,
suggests the following definition of a specific type of observers that
can also be considered off-shell, which we call \emph{Galilean observers}:

\begin{definition}[Galilean observers]
  \label{def:Gal_obs}
  An observer is called \emph{Galilean} if its expansion and vorticity
  tensors satisfy
  \begin{equation}
    \Theta_{ij} = H(t) h_{ij} + X_{ij} \; , \quad \Omega_{ij} = 0 \; ,
  \end{equation}
  where $X_{ij}$ is a curl-free \acronym{TT} tensor.  In other words, an
  observer is Galilean if it is vorticity-free and the \acronym{SVT}
  decomposition of its expansion tensor has no gradient part and a
  curl-free \acronym{TT} part.
\end{definition}

\begin{remark}
  In the standard approach to the Galilean limit, i.e., with $\ator_i
  = 0$, it is sometimes assumed that the decompositions \eqref{eq:vort_decomp_for_magnetic_part} and \eqref{eq:exp_decomp_for_magnetic_part} of
  the vorticity and expansion tensors feature the same vector field,
  i.e., $v_i = w_i$, and that we have $\Theta_{ij} = D_{(i} v_{j)}$ and
  $\Omega_{ij} = D_{[i} v_{j]}$ (see, e.g., \cite{Ellis:1971, Ellis:2009}).  As discussed above, this
  assumption is not generally true, for two reasons: (i)~the Galilean
  limit alone does not constrain the expansion tensor and its relation
  to the vorticity tensor, and (ii)~even assuming the Newton--Cartan
  equation, the two additional terms $H(t) h_{ij}$ and $X_{ij}$ are,
  in general, present in the expansion tensor, representing the total
  volume expansion of the spatial manifold (if finite) and the
  anisotropic expansion, respectively.  If the spatial manifold is
  $\mR^3$, i.e., assuming asymptotic flatness for the \acronym{SVT}
  decomposition theorem to hold, then we necessarily have $H(t) = 0$
  and $X_{ij} = 0$.
\end{remark}

\enlargethispage{.6\baselineskip}

The existence of Galilean observers is a key implication of the
Newton--Cartan equation.  In particular, when $X_{ij} = 0$, it ensures
that there exists a coordinate system adapted to the spatial
foliation, and such that the spatial metric is $h_{ij} = a(t)^2 \bar
h_{ij}(x^k)$, where $\dot a/a := \chi(t)$ and $\bar h_{ij}(x^k)$ is
time-independent.  This is not guaranteed without the existence of
Galilean observers.  Additionally, when the Newton--Cartan equation is
considered, the \emph{opposite} to the 4-acceleration of these observers
(with respect to a compatible torsion-free connection) is a solution
of the (cosmological) Poisson equation, and therefore corresponds to
the Newtonian gravitational field~\cite{Vigneron:2021}.  For these reasons, these
observers play a fundamental role in Newtonian gravitation.

By the definition of Galilean observers (and \eqref{eq:H_with_decomp}), we
have:
\begin{proposition}
  The magnetic Galilean Weyl tensor with respect to any Galilean
  observer vanishes.
\end{proposition}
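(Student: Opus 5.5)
The plan is to substitute the defining conditions of a Galilean observer directly into the definition \eqref{eq:def_H_Gal} of the magnetic part, rather than going through the general decomposed form \eqref{eq:H_with_decomp}, since the latter hides which terms vanish for which reason. For a Galilean observer we have $\Omega_{ij}=0$, so the vorticity term $D^a\Omega^b{}_{j\rangle}$ and the time-dilation term $\ator_{j\rangle}\Omega^{ab}$ in \eqref{eq:def_H_Gal} drop out immediately, leaving
\begin{equation*}
  H_{ij} = \eta_{ab\langle i} D^a \sigma^b{}_{j\rangle} \; .
\end{equation*}
It then remains to compute the shear. From $\Theta_{ij} = H(t)\, h_{ij} + X_{ij}$ with $X_{ij}$ traceless, the trace gives $\theta = 3H(t)$, and the traceless part gives $\sigma_{ij} = X_{ij}$. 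Hence $H_{ij} = \eta_{ab\langle i} D^a X^b{}_{j\rangle}$.

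The last step uses the curl-free hypothesis $D_{[a}X_{b]j}=0$. Since $D_i$ is the Levi-Civita connection of $h_{ij}$, it commutes with raising indices. The contraction $\eta_{abi}D^aX^b{}_j$ only sees the part of $D_aX_{bj}$ antisymmetric in $(a,b)$, and this part vanishes. So $H_{ij}=0$ before the symmetrisation and trace removal are even applied, which proves the statement.

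As a consistency check, I would also verify the claim via \eqref{eq:H_with_decomp}. With $\Omega_{ij}=0$ we may take $w_i=0$; this is allowed because the vorticity-free family is fixed and $u$ itself may be taken as the reference observer. We also have $v_i=0$, since there is no gradient part, and $X_{ij}$ is curl-free, so every term in \eqref{eq:H_with_decomp} vanishes. The one subtle point, and the only real obstacle, is that \eqref{eq:H_with_decomp} is stated with a general $w$ describing the vorticity. One must note that the vanishing of $\Omega_{ij}$ permits $w_i=0$ in \eqref{eq:vort_decomp_for_magnetic_part}; equivalently, the direct route through \eqref{eq:def_H_Gal} sidesteps this issue entirely. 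I would present the direct route as the proof.
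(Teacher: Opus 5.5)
Your proof is correct and is essentially the paper's argument. The paper derives the proposition in one line, by inserting $\Omega_{ij}=0$, a vanishing gradient part and a curl-free $X_{ij}$ into \eqref{eq:H_with_decomp}, which is your consistency check; your direct substitution into \eqref{eq:def_H_Gal} via $\sigma_{ij}=X_{ij}$ reaches the same conclusion slightly more economically, since it needs neither the commutator/three-dimensional curvature identities behind \eqref{eq:H_with_decomp} nor the spatial constancy of $H(t)$.
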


Therefore, if one wants the magnetic part to vanish, then, rather than
considering a specific field equation, a natural (pre-field-equation,
i.e., geometric/kinematical) assumption is to rather assume the
existence of Galilean observers.  If we do so, then, for any observer,
their expansion and vorticity tensors satisfy
\begin{equation}
  \Theta_{ij}
    = H(t) h_{ij} + D_{(i} v_{j)} + X_{ij} \; , \quad
  \Omega_{ij}
    = \partial_{[i} v_{j]} + \ator_{[i} v_{j]} \; ,
\end{equation}
where $X_{ij}$ is a curl-free \acronym{TT} tensor, and the magnetic part of the
Weyl tensor is
\begin{align}
  \uH{}_{ij}
    = \eta_{ab}{}_{(i} \CR_{j)a} v_b
      + \eta^{ab}{}_{\langle i}
      \left[
        \frac{1}{2} D_a\left(\ator_{b} v_{j\rangle}\right)
        - \frac{1}{2} D_a\left(v_{b} \ator_{j\rangle}\right)
        + \ator_{j\rangle} \left(D_{a} v_b + \ator_{a} v_b\right)
      \right] .
\end{align}
Taking $v^i = 0$ corresponds to considering a Galilean observer, and
we get $H_{ij} = 0$.

\subsection{Condition for a `Galilean' theory to be a `Newtonian' theory}
\label{sec:Newtonianity}

When considering the Newton--Cartan equation alone, without deriving
the Galilean structure from a Galilean limit, the existence of
irrotational (i.e., vorticity-free) observers is not guaranteed.  For
this reason, Trautman introduced a condition on connections in
standard Newton--Cartan gravity guaranteeing the existence of
irrotational observers \cite{Trautman:1963, Trautman:1965}; later, connections satisfying
this condition were dubbed `Newtonian' connections \cite{Kuenzle:1972}.  The
condition may, for example, be written as the curvature condition
$R^\mu{}_\nu{}^\rho{}_\sigma = R^\rho{}_\sigma{}^\mu{}_\nu{}$ (see, e.g.,
\cite[Section~2.4]{Schwartz:2026}).\footnote{Actually, even when this condition is considered, only
  the existence observers with a harmonic vorticity, i.e., $\Delta \Omega_{ij} =
  0$, are guaranteed (see \cite{Dautcourt:1990a} and references therein).  Therefore,
  deriving the Galilean structure from a Galilean limit (with $\tau\wedge\dd\tau
  = 0$) is really what guarantees the existence of irrotational
  observers.}  This lead to the distinction between `Galilean geometries/theories'
and `Newtonian geometries/theories': both are constructed from
Galilean structures, but only the latter fulfils the Newtonian
condition.  More generally, a Galilean theory is considered physically
relevant if it is Newtonian.

Inspired by our discussion in the previous section, and in the spirit
of this classical `Newtonian' condition introducd by Trautman, it
seems natural to require that a Galilean theory should be called
`Newtonian' only if there exist (Galilean) observers with vanishing
magnetic part of the Galilean Weyl tensor.  The status of this
additional condition for the `Newtonianity' of a theory is, however,
not the same as Trautman's `Newtonian' condition: the latter follows
automatically in the off-shell Galilean limit from the interchange
symmetry of the Lorentzian Riemann tensor, while the former condition
is not imposed by the Galilean limit, and in this sense is added
by~hand.

\subsection{Galilean structures with spatial curvature}
\label{sec:Spatial_Curvature}

Since the off-shell electric part of the Galilean Weyl tensor features
the spatial curvature, it is worth mentioning situations where such a
curvature appears in Galilean theories.  There are two main
situations:
\begin{enumerate}
\item One may consider Galilean theories with spatial curvature to
  reproduce, at \acronym{LO}, effects that are usually obtained at
  post-Newtonian orders in the classical (spatially flat) approach,
  like the perihelion precession of Mercury (see, e.g.,
  refs.~\cite{Abramowicz.EtAl:2014, Hansen.Hartong.Obers:2019,
    Hansen.Hartong.Obers:2020}).  In this approach, the spatial curvature is
  sourced by the time dilation vector field.  This approach does not
  aim at describing new physics, but rather at reproducing known one
  from the classical limit.  In this sense, the presence of spatial
  curvature is a choice.
\item Spatial curvature also arises when the spatial manifold has a
  topology not allowing for a Euclidean, i.e., flat, metric
  \cite{Kuenzle:1976, Roukema.Rozanski:2009, Vigneron:2022,
    Vigneron.Roukema:2023, Vigneron:2024}, e.g., with $\Sigma \cong \mS^3$ or $\Sigma \cong \mS \times
  \mS^2$.  In this situation, the presence of spatial curvature is not
  a choice (contrary to the previous example), but a necessity imposed
  by topology.  It was first shown by \textcite[\emph{Remark} on p.~448]{Kuenzle:1976}\footnote{See also the early work of \textcite{Zoellner:1872}, which is discussed in ref.~\cite{Kragh:2012}.}, and
  later more systematically studied in refs.~\cite{Vigneron:2022, Vigneron.Roukema:2023, Vigneron:2024}, that,
  to get a physically consistent Newtonian theory with spatial
  curvature (of topological origin), the Newton--Cartan equation
  should feature a background Ricci curvature,
  \begin{align}
    R_{\mu\nu} - \bar R_{\mu\nu} = 4\pi G \rho \tau_\mu \tau_\nu \; ,
  \end{align}
  where $\bar R_{\mu\nu}$ is constructed from the canonical (or
  topological) Ricci curvature of the considered topology.  That
  canonical curvature is given by the Thurston classification of
  3-dimensional topological spaces.
\end{enumerate}

As discussed in \Cref{sec:Newtonianity}, in both of these cases, for the Galilean theory
to be physically relevant, or in other words `Newtonian', the
existence of Galilean observers should be demanded.  The presence of
these observers implies some conditions on the time evolution of the
spatial curvature.  Indeed, taking $\iR{G}{\partial}_t$ to be the time
derivative along a Galilean observer, i.e., $\tfrac{1}{2} \iR{G}{\partial}_t
h_{ij} = H(t) h_{ij} + X_{ij}$, we obtain
\begin{align}
  \label{eq:dt_R_ij}
  \iR{G}{\partial}_t \CR_{ij}
  &= \Delta_\textrm{L} \left(\tfrac{1}{2}\iR{G}{\partial}_t h_{ij}\right)
    - D_i D_j \left(\tfrac{1}{2}h^{kl} \iR{G}{\partial}_t h_{kl}\right)
    \nonumber\\
  &= \Delta_\textrm{L} X_{ij}
  = \Delta X_{ij}
  = 3 \CR_{k(i} X_{j)}{}^k
    - h_{ij} \CR_{kl} X^{kl}
    - \tfrac{1}{2} \CR X_{ij} \; ,
\end{align}
where $\Delta_\textrm{L}$ is the Lichnerowicz Laplacian.\footnote{The
  Lichnerowicz Laplacian is defined as $\Delta_\textrm{L} Y_{ij} := -\Delta Y_{ij} + 2
  D^k D_{(i} Y_{j)k}$ for divergence-free, symmetric $Y_{ij}$.  If in
  addition $Y_{ij}$ is curl-free, one has $\Delta_\textrm{L} Y_{ij} = \Delta Y_{ij}$
  \emph{without} a minus sign, as used in \eqref{eq:dt_R_ij} applied to $X_{ij}$.  The last
  equality in \eqref{eq:dt_R_ij} follows by a standard computation using that
  $X_{ij}$ is divergence- and curl-free, and expressing the
  3-dimensional Riemann tensor in terms of the metric and the Ricci
  tensor.}  Conversely, \eqref{eq:dt_R_ij} constrains the anisotropic expansion
tensor $X_{ij}$ if the spatial curvature is prescribed.

\enlargethispage{1.2\baselineskip}

\section{Conclusion}
\label{sec:concl}

In this paper, we have provided manifestly conformally invariant
off-shell definitions for the Galilean Weyl tensor, its electric part,
and its magnetic part.  We followed two approaches: a Galilean Weyl
tensor can be defined (i)~by using a spacetime connection (\Cref{def:Gal_Weyl_off_shell_restricted}), or
(ii)~by using the spatial connection and kinematical quantities (\Cref{def:Gal_Weyl_off_shell_spatial}).
In both approaches, the clock form was foliation forming but not
exact.

While the former approach is closer to the usual Newton--Cartan
formalism in which equations are formulated in terms of the Galilean
structure, a spacetime connection, and its curvature tensor, the
freedom in the choice of that connection (due to $\dd\tau \ne0$) makes
the definitions of the Galilean Weyl tensors not unique, and difficult
to interpret.  In \Cref{def:Gal_Weyl_off_shell_restricted}, we chose the unique boost-invariant
connection~\eqref{eq:unique_A_con} that can be derived from a Galilean structure and a
choice of timelike vector field and Coriolis field.  But, as discussed
in \Cref{sec:discuss_spacetime_Weyl}, other connections could be considered, and therefore, this
remains a choice.

The purely spatial definition (\Cref{def:Gal_Weyl_off_shell_spatial}) of the electric and magnetic parts
of the Galilean Weyl tensor does not suffer from the issues of the
spacetime approach because there is a unique spatial connection to
consider, namely the Levi-Civita connection of the spatial metric.  In
this approach, the electric and magnetic parts are expressed in terms
of the kinematical quantities of an observer, making these definitions
much more geometrically transparent.  Contrary to the on-shell
approach, the electric part is not related to the tidal tensor but to
the spatial curvature, and the magnetic part does not vanish in
general.

In the spirit of the `Newtonian' condition for standard Newton--Cartan
gravity, introduced by Trautman, we argued that requiring the
existence of specific observers for which the magnetic part vanishes,
called \emph{Galilean observers}, should be a necessary condition for a
Galilean theory to be considered physical, hence called `Newtonian'.
We argued that this condition should be considered when constructing
any Galilean invariant theory featuring spatial curvature.

\medskip

Stepping back, our work helps to clarify and contextualise existing
results for the Galilean Weyl tensors presented in refs.~\cite{Ehlers.Buchert:2009, Dewar.Weatherall:2018,
  Dewar.Read:2020}, and to understand their limitations when they are
considered off-shell.  But---even more importantly---it should pave
the way for future directions of research, for example:
\begin{enumerate}
\item With a fully general understanding of Galilean Weyl tensors in hand,
  it is natural to ask whether this object can be leveraged in ways
  analogous to the relativistic context---for example, in developing a
  non-relativistic analogue of the Petrov classification of
  spacetimes, or in understanding non-relativistic gravitational
  radiation (on which see ref.~\cite{Linnemann.Read:2021}).
\item One can ask whether properties of the relativistic Weyl tensor carry
  over to the Galilean regime.  For example, does a vanishing Galilean
  Weyl tensor imply conformal flatness of the Galilean structure?
\item Since there has been much recent work on Galilean conformal field
  theory \cite{Bagchi.Gopakumar:2009, Bagchi.Mandal:2009} and twistor theory \cite{Dunajski.Gundry:2016, Dunajski.Penrose:2023, March.Read:2026}, it is
  natural to ask whether the Galilean Weyl tensor will constitute an
  object of study in those contexts going forward.
\end{enumerate}

\section*{Acknowledgements}

P.K.S.\ and Q.V.\ thank Pembroke College, Oxford, for hospitality
during a stay where the present research was conducted.  Q.V.\ was
supported in part by the European Research Council (\acronym{ERC})
under the European Union's Horizon 2020 research and innovation
program (grant agreement \acronym{ERC} advanced grant
740021-\acronym{ARTHUS}, principal investigator Thomas Buchert).  For
helpful discussions J.R.\ thanks Niels Linnemann, Eleanor March, and
the audience of a talk at \acronym{UC} Irvine.

\appendix

\section{Proof of uniqueness of the boost-invariant connection}
\label{app:proof_unique_A_con}

Here we prove the uniqueness result for the boost-invariant connection
discussed in \Cref{sec:unique_A_con}.

First, we prove a very helpful statement:

\enlargethispage{1.4\baselineskip}

\begin{proposition}
  \label{prop:boost-inv}
  Let $(\tau_\mu, h^{\mu\nu})$ be a Galilean structure.  A connection is
  boost-invariant if and only if its non-metricities with respect to
  $(\tau_\mu, h^{\mu\nu})$ and torsion are boost-invariant and its Coriolis
  field transforms under boosts according to \eqref{eq:fix_con_Coriolis_trafo}, i.e., satisfies
  \begin{align}
    \label{eq:boost-inv_Coriolis_trafo}
    \ukappa{}_{\mu\nu}
    &\xrightarrow{u \to u - v} \ukappa{}_{\mu\nu} - \partial_{[\mu} v_{\nu]}
      + \tau_{[\mu} \partial_{\nu]} \left(\tfrac{v^2}{2}\right)
      + \frac{1}{2} T^\rho{}_{\mu\nu} v_\rho \nonumber\\
    &\qquad\qquad- Q_{[\mu\nu]\rho} v^\rho + \tfrac{1}{2} \tau_{[\mu} Q_{\nu]\rho\sigma} v^\rho v^\sigma
      + (u^\rho - v^\rho) \bigl(v_{[\mu} + v^2 \tau_{[\mu}\bigr) Q_{\nu]\rho} \; .
  \end{align}

  \begin{proof}
    The proof of this statement is essentially an exercise in
    precisely keeping track of which objects depend on a choice of
    (unit) timelike vector field (with respect to $(\tau_\mu, h^{\mu\nu})$).
    We will carry this out in a very explicit way, such as to be as
    clear as possible.

    For the sake of explicitness, for the period of this proof we
    introduce the notation
    \begin{equation}
      \kappa[u;\nabla]_{\mu\nu}
      := (\text{Coriolis field of $\nabla$ with respect to $u$})_{\mu\nu}
      = (\nabla_{[\mu} u^\alpha) \ub{u}_{\nu]\alpha}
    \end{equation}
    for any affine connection $\nabla$ and timelike vector field $u^\mu$.
    According to the discussion from \Cref{sec:Gal_con}, the Coriolis fields of a
    fixed connection with respect to different timelike vector fields
    satisfy
    \begin{align}
      \label{eq:boost_inv_proof_gen_Coriolis_trafo}
      \kappa[u - v;\nabla]_{\mu\nu}
      &= \kappa[u;\nabla]_{\mu\nu}
        - \partial_{[\mu} v_{\nu]}
        + \tau_{[\mu} \partial_{\nu]} \left(\tfrac{v^2}{2}\right)
        + \tfrac{1}{2} T^\rho{}_{\mu\nu} v_\rho \nonumber\\
      &\quad- Q_{[\mu\nu]\rho} v^\rho + \tfrac{1}{2} \tau_{[\mu} Q_{\nu]\rho\sigma} v^\rho v^\sigma
        + (u^\rho - v^\rho) \bigl(v_{[\mu} + v^2 \tau_{[\mu}\bigr) Q_{\nu]\rho}
    \end{align}
    for $u^\mu$ a timelike and $v^\mu$ a spacelike vector field, where
    $Q_\alpha{}^{\mu\nu} = \nabla_\alpha h^{\mu\nu}$, $Q_{\mu\nu} = \nabla_\mu \tau_\nu$ are the
    non-metricities and $T^\alpha{}_{\mu\nu}$ is the torsion of $\nabla$.

    Now we are going to more precisely formulate the statement we want
    to prove.  Explicitly stated in fully written-out form, it means
    the following: for a map $u \mapsto \nabla[u]$ from the set of timelike
    vector fields to the set of affine connections, we have
    \begin{equation}
      \label{eq:boost-inv_proof}
      \nabla[u] = \nabla[\tilde{u}] \; \text{for all timelike vector fields} \; u^\mu, \tilde{u}^\mu
    \end{equation}
    (i.e., the map is constant) if and only if we have
    \begin{subequations} \label{eq:boost_inv_proof_conds}
    \begin{equation}
      \label{eq:boost_inv_proof_cond_non-metr_tor}
      \nabla[u]_\mu \tau_\nu = \nabla[\tilde{u}]_\mu \tau_\nu \; , \quad
      \nabla[u]_\alpha h^{\mu\nu} = \nabla[\tilde{u}]_\alpha h^{\mu\nu} \; , \quad
      T[\nabla[u]]^\alpha{}_{\mu\nu} = T[\nabla[\tilde{u}]]^\alpha{}_{\mu\nu}
    \end{equation}
    for all timelike vector fields $u^\mu, \tilde{u}^\mu$ and
    \begin{align}
      \label{eq:boost_inv_proof_cond_Coriolis}
      \kappa[u - v;\nabla[u - v]]_{\mu\nu}
      &= \kappa[u;\nabla[u]]_{\mu\nu}
        - \partial_{[\mu} v_{\nu]}
        + \tau_{[\mu} \partial_{\nu]} \left(\tfrac{v^2}{2}\right)
        + \tfrac{1}{2} T^\rho{}_{\mu\nu} v_\rho \nonumber\\
      &\quad- Q_{[\mu\nu]\rho} v^\rho + \tfrac{1}{2} \tau_{[\mu} Q_{\nu]\rho\sigma} v^\rho v^\sigma
        + (u^\rho - v^\rho) \bigl(v_{[\mu} + v^2 \tau_{[\mu}\bigr) Q_{\nu]\rho}
    \end{align}
    \end{subequations}
    for any timelike vector field $u^\mu$ and spacelike vector field
    $v^\mu$, where the non-metricities and torsion are independent of
    $u^\mu$ by \eqref{eq:boost_inv_proof_cond_non-metr_tor}.

    Now we are going to prove the claimed equivalence of \eqref{eq:boost-inv_proof} and \eqref{eq:boost_inv_proof_conds}.
    First assuming that the connection is boost-invariant, i.e., $\nabla[u]
    = \nabla$ is a fixed connection, its non-metricities and torsion are
    independent of $u$, i.e., we have \eqref{eq:boost_inv_proof_cond_non-metr_tor}; further, the equation \eqref{eq:boost_inv_proof_gen_Coriolis_trafo}
    relating the Coriolis fields of a fixed connection with respect to
    different timelike vector fields directly translates into \eqref{eq:boost_inv_proof_cond_Coriolis}.

    For the converse, assume that the equations \eqref{eq:boost_inv_proof_conds} hold.  We fix a
    timelike vector field $u^\mu$ and a spacelike vector field $v^\mu$,
    and consider the two connections $\nabla[u]$ and $\nabla[u - v]$.  By \eqref{eq:boost_inv_proof_cond_non-metr_tor},
    we know that these have agreeing non-metricities and torsion.
    Further, combining the general equation \eqref{eq:boost_inv_proof_gen_Coriolis_trafo} applied to $\nabla =
    \nabla[u-v]$ with the assumed equation \eqref{eq:boost_inv_proof_cond_Coriolis} yields
    \begin{equation}
      \kappa[u;\nabla[u-v]]_{\mu\nu} = \kappa[u;\nabla[u]]_{\mu\nu} \; .
    \end{equation}
    Therefore, the two connections $\nabla[u]$ and $\nabla[u - v]$ have agreeing
    non-metricities, torsion, and Coriolis field with respect to $u$.
    Therefore, according to the classification theorem~\cite{Schwartz:2025} they agree,
    $\nabla[u] = \nabla[u - v]$.  Thus we have shown that $\nabla[u]$ is independent
    of $u$, finishing the proof.
  \end{proof}
\end{proposition}

Now we can prove uniqueness and existence of the boost-invariant
connection discussed in \Cref{sec:unique_A_con}.

\begin{proof}[Proof of \Cref{prop:unique_A_con}]
  Let $(\tau_\mu, h^{\mu\nu})$ be a Galilean structure with $\tau\wedge\dd\tau = 0$, $u^\mu$
  a timelike vector field, and $\ukappa{}_{\mu\nu}$ an arbitrary two-form.

  We consider an affine connection $\nabla$ for which we fix $(\nabla_{[\mu} u^\sigma)
  \ub{}_{\nu]\sigma} = \ukappa{}_{\mu\nu}$, and whose torsion and non-metricities
  are defined locally from $(\tau_\mu, h^{\mu\nu})$ and $u^\mu$, depending on
  derivatives of $\tau_\mu$, $h^{\mu\nu}$ only to first order and linearly.
  Consequently, the torsion and non-metricities are tensor fields
  depending pointwise on $(\tau_\mu, h^{\mu\nu})$ and $u^\mu$, and pointwise
  linearly on $(\dd\tau)_{\mu\nu}$---i.e., on $\ator^\mu$ as we assume $\tau\wedge\dd\tau
  = 0$---and on $\Lie{u} h^{\mu\nu}$.  However, regarding $\Lie{u}
  h^{\mu\nu}$, only its spatial part $-\frac{1}{2} \ub{}_{\mu\alpha} \ub{}_{\nu\beta}
  \Lie{u} h^{\alpha\beta} = \frac{1}{2} \Lie{u} \ub{}_{\mu\nu} = \uTheta{u}_{\mu\nu}$
  is independent of $\ator^\mu$.\footnote{We have the identity $\Lie{u} h^{\mu\nu} =
    -h^{\mu\alpha} h^{\nu\beta} \Lie{u} \ub{}_{\alpha\beta} - 2 \ator^{(\mu} v^{\nu)}$, showing
    that the purely temporal part of $\Lie{u} h^{\mu\nu}$ vanishes, and
    that its mixed temporal-spatial part depends on $\ator^\mu$.}

  Grouping together the torsion and non-metricity terms, we therefore
  consider a connection of the form
  \begin{equation}
    \Gamma^\alpha_{\mu\nu}
    = \ucheckGamma{u}^\alpha_{\mu\nu} + 2 \tau_{(\mu} \ukappa{}_{\nu)}{}^\alpha
      + \CC^\alpha{}_{\mu\nu}[\ator^\mu, \Theta_{\mu\nu}] \; ,
  \end{equation}
  where $\ucheckGamma{}^\alpha_{\mu\nu}$ is defined in~\eqref{eq:con_u} and
  $\CC^\alpha{}_{\mu\nu}[\ator^\mu, \Theta_{\mu\nu}]$ is a (1,2)-tensor depending
  pointwise on $(\tau_\mu, h^{\mu\nu})$, on $u^\mu$, and linearly on $\ator^\mu$
  and $\Lie{u} \ub{}_{\mu\nu}$.  Therefore, the only possible independent
  terms in $\CC^\alpha{}_{\mu\nu}$ are
  \begin{align}
    \label{eq:proof_unique_A_con_indep_terms}
    \ub{}^\alpha{}_{(\mu} \uator{}_{\nu)} \; ,
    && \ub{}_{\mu\nu} \ator^\alpha \; ,
    && u^\alpha \tau_{(\mu} \uator{}_{\nu)} \; ,
    && \tau_\mu \tau_\nu \ator^\alpha \; ,
    && \ub{}^\alpha{}_{[\mu} \uator{}_{\nu]} \; ,
    && u^\alpha \tau_{[\mu} \uator{}_{\nu]} \; , \nonumber\\
    \ub{}^\alpha{}_{(\mu} \tau_{\nu)} \, \theta \; ,
    && \ub{}_{\mu\nu} u^\alpha \, \theta \; ,
    && u^\alpha \tau_\mu \tau_\nu \, \theta \; ,
    && \ub{}^\alpha{}_{[\mu} \tau_{\nu]} \, \theta \; , \nonumber\\
    \Theta^\alpha{}_{(\mu} \tau_{\nu)} \; ,
    && \Theta_{\mu\nu} u^\alpha \; ,
    && \Theta^\alpha{}_{[\mu} \tau_{\nu]} \; ,
  \end{align}
  where $\theta := \Theta_{\mu\nu} h^{\mu\nu}$.\footnote{A similar approach was followed in ref.~\cite{Vigneron.Barzegar.Read:2025} to define `reduced non-metricities' and `reduced torsion' in order to remove the appearance of $\omega_{\mu\nu}$ in the identities~\eqref{eq:Gal_Q_T}.  However, terms with $\Theta_{\mu\nu}$ were not considered.  Furthermore, a term of the form `$\ator_\alpha \, h^{\mu\nu}$' was omitted when listing the possible terms in the non-metricity $Q_\alpha{}^{\mu\nu}$ (see Appendix~A.1 in ref.~\cite{Vigneron.Barzegar.Read:2025}).  As a result, the proposed construction in that paper is not exhaustive and therefore not compatible with the study of the present
    paper.}%
  \textsuperscript{,}\footnote{Replacing $\Theta_{\mu\nu}$ by $\sigma_{\mu\nu} := \Theta_{\mu\nu} -
    \frac{\theta}{3} \ub{}_{\mu\nu}$ is also possible, but equivalent.
    However, using the former is simpler for the proof.}

  As we fix the Coriolis field with respect to $u^\mu$ with $(\nabla_{[\mu}
  u^\alpha) \gamma_{\nu]\alpha} = \ukappa{}_{\mu\nu}$, this implies the constraint
  \begin{equation}
    u^\beta \ub{}_{\alpha[\mu} \, \CC^\alpha{}_{\nu]\beta} = 0 \; .
  \end{equation}
  Therefore, the term $\tau_\mu \tau_\nu \ator^\alpha$ is forbidden.  Hence, the
  connection has the general form\footnote{The terms $\ub{}_{\mu\nu} u^\alpha$ and
    $\Theta_{\mu\nu} u^\alpha$ are not dimensionless, contrary to the other terms
    in~\eqref{eq:con_Gal_ator}.  Therefore, the coefficients $d_2$ and $d_6$ must have the
    dimension of a velocity to the power $-2$.}
  \begin{align}
    \label{eq:con_Gal_ator}
    \Gamma^\alpha_{\mu\nu}
    &= \ucheckGamma{}^\alpha_{\mu\nu} + 2\tau_{(\mu} \ukappa{}_{\nu)}{}^\alpha \nonumber\\
    &\quad+ c_1 \ub{}^\alpha{}_{(\mu} \uator{}_{\nu)}
      + c_2 \ub{}_{\mu\nu} \ator^\alpha
      + c_3 \, u^\alpha \tau_{(\mu} \uator{}_{\nu)}
      + c_4 \ub{}^\alpha{}_{[\mu} \uator{}_{\nu]}
      + c_5 \, u^\alpha \tau_{[\mu} \uator{}_{\nu]} \nonumber\\
    &\quad+ \left(
      d_1 \ub{}^\alpha{}_{(\mu} \tau_{\nu)}
      + d_2 \ub{}_{\mu\nu} u^\alpha
      + d_3 \, u^\alpha \tau_\mu \tau_\nu
      + d_4 \ub{}^\alpha{}_{[\mu} \tau_{\nu]}
      \right) \theta \nonumber\\
    &\quad+ d_5 \, \Theta^\alpha{}_{(\mu} \tau_{\nu)}
      + d_6 \, \Theta_{\mu\nu} u^\alpha
      + d_7 \, \Theta^\alpha{}_{[\mu} \tau_{\nu]} \; ,
  \end{align}
  where the $c_i$ and $d_i$ are constants.  The non-metricities and
  torsion of this connection can be computed to be
  \begin{subequations} \label{eq:con_Gal_ator_non-met_time_full}
  \begin{align}
    \label{eq:con_Gal_ator_non-met_time}
    Q_{\mu\nu}
    &= -(1 - c_5) \uator{}_{[\mu} \tau_{\nu]}
      - c_3 \uator{}_{(\mu} \tau_{\nu)}
      - (d_2 \ub{}_{\mu\nu} + d_3 \, \tau_\mu \tau_\nu) \, \theta
      - d_6 \, \Theta_{\mu\nu} \; , \\
    \label{eq:con_Gal_ator_non-met_space}
    Q_\alpha{}^{\mu\nu}
    &= (c_1 - c_4) \uator{}_\alpha \, h^{\mu\nu}
      + (-1 + c_3 + c_5) \, \tau_\alpha u^{(\mu} \ator^{\nu)}
      + (c_1 + 2 c_2 + c_4) \ub{}^{(\mu}{}_\alpha \ator^{\nu)} \nonumber\\
    &\quad+ \left[(d_1 - d_4) \, \tau_\alpha h^{\mu\nu}
        + 2 d_2 \ub{}^{(\mu}{}_\alpha u^{\nu)} \right] \theta
      + (d_5 - d_7) \, \tau_\alpha \Theta^{\mu\nu}
      + 2 d_6 \, \Theta^{(\mu}{}_\alpha u^{\nu)} \; , \\
    \label{eq:con_Gal_ator_tor}
    T^\alpha{}_{\mu\nu}
    &= 2 c_4 \ub{}^\alpha{}_{[\mu} \uator{}_{\nu]}
      + 2 c_5 \, u^\alpha \tau_{[\mu} \uator{}_{\nu]}
      + 2 d_4 \, \delta^\alpha_{[\mu} \tau^{\vphantom{\alpha}}_{\nu]} \theta
      + 2 d_7 \, \Theta^\alpha{}_{[\mu} \tau_{\nu]} \; ,
  \end{align}
  \end{subequations}
  where we have used $\ub{}^\alpha{}_{[\mu} \tau_{\nu]} = \delta^\alpha_{[\mu}
  \tau^{\vphantom{\alpha}}_{\nu]}$.

  Now we consider how the different quantities appearing in \eqref{eq:proof_unique_A_con_indep_terms}
  transform under Galilean boosts $u^\mu \to u^\mu - v^\mu$.  We denote the
  \emph{linearised} change of any quantity $f[u]$ under this boost by $\delta_v
  f[u]$, i.e.,
  \begin{equation}
    f[u - v] =: f[u] + \delta_v f[u] + \text{(terms at least quadratic in
      $v$)}.\footnote{In fact, for the quantities considered in the following,
            only quadratic terms would appear.}
  \end{equation}
  Using this notation, we have
  \begin{subequations} \label{eq:linear_boost_variations}
  \begin{align}
    \delta_v u^\mu
    &= -v^\mu \; , \quad
    \delta_v \ub{}_{\mu\nu}
    = 2 \tau_{(\mu} v_{\nu)} \; , \quad
    \delta_v \ub{}^\mu{}_\nu
    = v^\mu \tau_\nu \; , \quad
    \delta_v \uator{}_\mu
    = \tau_\mu \, (\ator \cdot v) \; , \\
    \delta_v \Theta_{\mu\nu}
    &= v_{(\mu} \ator_{\nu)}
      - 2 \tau_{(\mu} \iR{v}{\omega}_{\nu)\alpha} u^\alpha
      - \frac{1}{2} \Lie{v} \ub{}_{\mu\nu} \; , \quad
    \delta_v \theta
    = \ator \cdot v - \frac{1}{2} h^{\mu\nu} \Lie{v} \ub{}_{\mu\nu} \; ,
  \end{align}
  \end{subequations}
  where we defined
  \begin{equation}
    v_\mu := \ub{}_{\mu\nu} v^\nu \; , \quad
    \iR{v}{\omega}_{\mu\nu} := \partial_{[\mu} v_{\nu]} \; , \quad
    \ator \cdot v := \ub{}_{\mu\nu} \ator^\mu v^\nu \; .
  \end{equation}

  For the connection to be boost-invariant, by \Cref{prop:boost-inv} it is necessary
  that its non-metricities and torsion are boost-invariant; in
  particular, their linear boost variations $\delta_v Q_{\mu\nu}$, $\delta_v
  Q_\alpha{}^{\mu\nu}$, $\delta_v T^\alpha{}_{\mu\nu}$ need to be zero.  Additionally,
  because these need to vanish for \emph{any} spacelike vector field $v^\mu$,
  if two terms appearing in one of these expressions differ as
  functionals of $v^\mu$, then they need to vanish independently.  Using
  the linearised boost, rather than the full boost, will simplify the
  computations, and, as we will see below, this will be sufficient to
  prove \Cref{prop:unique_A_con}.

  First computing the linear boost variation of $Q_{\mu\nu}$ as given by
  \eqref{eq:con_Gal_ator_non-met_time}, using \eqref{eq:linear_boost_variations} we obtain
  \begin{align}
    \delta_v Q_{\mu\nu}
    &= -(1 - c_5) \, (\delta_v \uator{}_{[\mu}) \, \tau_{\nu]}
      - c_3 \, (\delta_v \uator{}_{(\mu}) \, \tau_{\nu)}
      - d_2 \, (\delta_v \ub{}_{\mu\nu}) \, \theta \nonumber\\
    &\quad- (d_2 \ub{}_{\mu\nu} + d_3 \, \tau_\mu \tau_\nu) \, \delta_v \theta
      - d_6 \, \delta_v \Theta_{\mu\nu} \nonumber\\
    &= -(c_3 + d_3) \, (\ator \cdot v) \, \tau_\mu \tau_\nu
      - 2 d_2 \, \tau_{(\mu} v_{\nu)} \, \theta
      - d_2 \, (\ator \cdot v) \ub{}_{\mu\nu} \, \nonumber\\
    &\quad+ (d_2 \ub{}_{\mu\nu} + d_3 \, \tau_\mu \tau_\nu) \,
        \tfrac{1}{2} h^{\rho\sigma} \Lie{v} \ub{}_{\rho\sigma}
      - d_6 \, (v_{(\mu} \ator_{\nu)} - 2 \tau_{(\mu} \iR{v}{\omega}_{\nu)\alpha} u^\alpha
        - \tfrac{1}{2} \Lie{v} \ub{}_{\mu\nu}).
  \end{align}
  This must vanish for all spacelike vector fields $v^\mu$.  Considering
  individually the coefficients of terms depending functionally on
  $v^\mu$ in different ways, this imposes
  \begin{itemize}[nosep]
  \item $d_2 = 0$ as coefficient of $\tau_{(\mu} v_{\nu)} \theta$,
  \item $d_3 = 0$ as coefficient of $\tau_\mu \tau_\nu \tfrac{1}{2} h^{\rho\sigma} \Lie{v}
    \ub{}_{\rho\sigma}$,
  \item $d_6 = 0$ as coefficient of $\tfrac{1}{2} \Lie{v} \ub{}_{\mu\nu}$, and finally
  \item $c_3 + d_3 = 0$ as coefficient of $(\ator \cdot v) \, \tau_\mu \tau_\nu$, i.e.,
    (with $d_3 = 0$ from above) $c_3 = 0$.
  \end{itemize}

  Next, for the variation of $T^\alpha{}_{\mu\nu}$ as given by \eqref{eq:con_Gal_ator_tor} we obtain
  \begin{align}
    \delta_v T^\alpha{}_{\mu\nu}
    &= 2 c_4 (\delta_v \ub{}^\alpha{}_{[\mu}) \uator{}_{\nu]}
      + 2 c_4 \ub{}^\alpha{}_{[\mu} \, \delta_v \uator{}_{\nu]}
      + 2 c_5 \, (\delta_v u^\alpha) \, \tau_{[\mu} \uator{}_{\nu]}
      + 2 c_5 \, u^\alpha \tau_{[\mu} \, \delta_v \uator{}_{\nu]} \nonumber\\
    &\quad+ 2 d_4 \, \delta^\alpha_{[\mu} \tau^{\vphantom{\alpha}}_{\nu]} \delta_v \theta
      + 2 d_7 \, h^{\alpha\beta} (\delta_v \Theta_{\beta[\mu}) \tau_{\nu]} \nonumber\\
    &= 2 (c_4 - c_5 - \tfrac{1}{2} d_7) \, v^\alpha \tau_{[\mu} \uator{}_{\nu]}
      + 2 (c_4 + d_4) \, (\ator \cdot v) \, \delta^\alpha_{[\mu} \tau^{\vphantom{\alpha}}_{\nu]}
      \nonumber\\
    &\quad- d_4 \, \delta^\alpha_{[\mu} \tau^{\vphantom{\alpha}}_{\nu]}
        h^{\rho\sigma} \Lie{v} \ub{}_{\rho\sigma}
      + d_7 \, (\ator^\alpha v_{[\mu} - h^{\alpha\beta} \Lie{v} \ub{}_{\beta[\mu}) \tau_{\nu]} \;
      .
  \end{align}
  Again looking at the coefficients of terms depending differently on
  $v^\mu$, vanishing of this expression imposes
  \begin{itemize}[nosep]
  \item $d_4 = 0$ as coefficient of $\delta^\alpha_{[\mu} \tau^{\vphantom{\alpha}}_{\nu]} h^{\rho\sigma}
    \Lie{v} \ub{}_{\rho\sigma}$,
  \item $d_7 = 0$ as coefficient of $h^{\alpha\beta} \Lie{v} \ub{}_{\beta[\mu} \tau_{\nu]}$,
  \item $c_4 - c_5 - \tfrac{1}{2} d_7 = 0$ as coefficient of $v^\alpha \tau_{[\mu}
    \uator{}_{\nu]}$, i.e., (with $d_7 = 0$ from above) $c_4 = c_5$,
    and\looseness-1
  \item $c_4 + d_4 = 0$ as coefficient of $(\ator \cdot v) \, \delta^\alpha_{[\mu}
    \tau^{\vphantom{\alpha}}_{\nu]}$, i.e., (with $d_4 = 0$ from above) $c_4 =
    0$, further implying $c_5 = 0$.
  \end{itemize}

  Finally, using all the already established vanishing of some of the
  constants $c_i$ and $d_i$, for the linear boost variation of
  $Q_\alpha{}^{\mu\nu}$ as given by \eqref{eq:con_Gal_ator_non-met_space} we obtain
  \begin{align}
    \delta_v Q_\alpha{}^{\mu\nu}
    &= c_1 \, (\delta_v \uator{}_\alpha) \, h^{\mu\nu}
      - \tau_\alpha (\delta_v u^{(\mu}) \, \ator^{\nu)}
      + (c_1 + 2 c_2) (\delta_v \ub{}^{(\mu}{}_\alpha) \, \ator^{\nu)} \nonumber\\
    &\quad+ d_1 \, \tau_\alpha h^{\mu\nu} \delta_v \theta
      + d_5 \, \tau_\alpha h^{\mu\rho} h^{\nu\sigma} \delta_v \Theta_{\rho\sigma} \nonumber\\
    &= c_1 \, (\ator \cdot v) \, \tau_\alpha h^{\mu\nu}
      + (1 + c_1 + 2 c_2 + d_5) \, \tau_\alpha v^{(\mu} \ator^{\nu)} \nonumber\\
    &\quad+ d_1 \, \tau_\alpha h^{\mu\nu}
        (\ator \cdot v - \tfrac{1}{2} h^{\rho\sigma} \Lie{v} \ub{}_{\rho\sigma})
      - d_5 \, \tau_\alpha \tfrac{1}{2} h^{\mu\rho} h^{\nu\sigma} \Lie{v} \ub{}_{\rho\sigma} \; .
  \end{align}
  Vanishing of this expression imposes
  \begin{itemize}[nosep]
  \item $d_1 = 0$ as coefficient of $\tau_\alpha h^{\mu\nu} \tfrac{1}{2} h^{\rho\sigma}
    \Lie{v} \ub{}_{\rho\sigma}$,
  \item $d_5 = 0$ as coefficient of $\tau_\alpha \tfrac{1}{2} h^{\mu\rho} h^{\nu\sigma}
    \Lie{v} \ub{}_{\rho\sigma}$,
  \item $c_1 = 0$ as coefficient of $(\ator \cdot v) \, \tau_\alpha h^{\mu\nu}$, and
  \item $1 + c_1 + 2 c_2 + d_5 = 0$ as coefficient of $\tau_\alpha v^{(\mu}
    \ator^{\nu)}$, i.e., (with $d_5 = 0$, $c_1 = 0$ from above) $c_2 =
    -\tfrac{1}{2}$.
  \end{itemize}

  Thus we have shown that for the connection \eqref{eq:con_Gal_ator} to be
  boost-invariant, we need $c_2 = -\tfrac{1}{2}$ and all the other
  constants to be zero. This shows uniqueness of boost-invariant
  connections depending linearly on first derivatives of $\tau_\mu$ and
  $h^{\mu\nu}$: any such connection has coefficients
  \begin{equation}
    \unicon{\Gamma}^\alpha_{\mu\nu}
    = \ucheckGamma{}^\alpha_{\mu\nu} + 2\tau_{(\mu} \ukappa{}_{\nu)}{}^\alpha
      - \frac{1}{2} \ub{}_{\mu\nu} \ator^\alpha ,
  \end{equation}
  with non-metricities and torsion
  \begin{equation}
    \unicon{Q}_{\mu\nu} = \tau_{[\mu} \uator{}_{\nu]} = \omega_{\mu\nu} \; , \quad
    \unicon{Q}_\alpha{}^{\mu\nu} = -\delta^{(\mu}_\alpha \ator^{\nu)}_{\vphantom{\alpha}} \; , \quad
    \unicon{T}^\alpha{}_{\mu\nu} = 0 \; .
  \end{equation}
  Note that these are boost-invariant not only to linear order (as we
  used in the proof), but exactly.  Inserting the non-metricities and
  torsion into the general transformation law \eqref{eq:boost-inv_Coriolis_trafo} for the Coriolis
  field of boost-invariant connections, a direct computation shows
  that this simplifies to
  \begin{equation}
    \ukappa{}_{\mu\nu}
    \xrightarrow{u \to u - v}
    \ukappa{}_{\mu\nu}
    - \partial_{[\mu} \left(v_{\nu]} + \tau_{\nu]} \tfrac{v^2}{2} \right),
  \end{equation}
  as claimed in the statement of \Cref{prop:unique_A_con}.

  Conversely, since the non-metricities and torsion are
  boost-invariant, it follows by \Cref{prop:boost-inv} that the unique connection we
  derived \emph{is} indeed boost-invariant, which shows existence.
\end{proof}

\begin{remark}
  In \Cref{app:restricted_boosts}, we discuss connections only invariant under restricted
  boosts, for which $\ator \cdot v = 0$.  This leaves more freedom for the
  connection, as several of the constraints on the constants $c_i$ and
  $d_i$ are not imposed.
\end{remark}

\section{Boost-invariant connections depending on extra structures}
\label{app:boost_inv_extra}

In this appendix we detail three methods to define boost-invariant
connections which all require the introduction of extra structures on
top of the Galilean structure and a choice of Coriolis field.

\subsection{With restricted boosts}
\label{app:restricted_boosts}

When proving the uniqueness of the boost-invariant connection~\eqref{eq:unique_A_con} in
\Cref{app:proof_unique_A_con}, we used at several points that for a generic boost, we have
$\ator \cdot  v \ne 0$.  A possibility to obtain another connection
than~\eqref{eq:unique_A_con} is to demand the connection to be only invariant under
\emph{restricted Galilean boosts} with $\ator \cdot v = 0$, i.e., for which
the boost parameter field $v^\mu$ is orthogonal to the time dilation
vector field.

In this case, repeating the analysis of \Cref{app:proof_unique_A_con}, we obtain the following:
\begin{itemize}
\item Vanishing of $\delta_v Q_{\mu\nu}$ still yields $d_2 = d_3 = d_6 = 0$, but we
  may have $c_3 \ne 0$;
\item vanishing of $\delta_v T^\alpha{}_{\mu\nu}$ still yields $d_4 = d_7 = 0$ and $c_4
  = c_5$, but we may have $c_4 \ne 0$; and
\item vanishing of $\delta_v Q_\alpha{}^{\mu\nu}$, where we now need to keep the $c_3$
  and $c_4 = c_5$ terms, still yields $d_1 = d_5 = 0$, but we may have
  $c_1 \ne 0$, and the last equation between the coefficients becomes
  $1 + c_1 + 2 c_2 - c_3 = 0$.
\end{itemize}
Combined, we only need
\begin{equation}
  c_3 = 1 + c_1 + 2 c_2 \; , \quad c_5 = c_4 \; ,
\end{equation}
leading to
\begin{subequations}
\begin{align}
  Q_{\mu\nu}
  &= (1-c_4) \, \tau_{[\mu} \uator{}_{\nu]}
    - (1 + c_1 + 2 c_2 ) \uator{}_{(\mu} \tau_{\nu)}\; , \\
  Q_\alpha{}^{\mu\nu}
  &=  \left(c_1-c_4\right) \uator{}_\alpha h^{\mu\nu}
    + \left(c_1 + 2c_2 + c_4\right) \delta^{(\mu}_\alpha \ator_{\vphantom{\alpha}}^{\nu)}
    \; , \\
  T^\alpha{}_{\mu\nu}
  &= 2\, c_4 \delta^\alpha_{[\mu} \uator{}^{\vphantom{\alpha}}_{\nu]} \; .
\end{align}
\end{subequations}
Since $\uator{}_\mu$ is invariant under restricted boosts, the
non-metricities and torsion are invariant under restricted boosts.
The transformation of the Coriolis field may be shown (by direct
calculation) to be
\begin{equation}
  \ukappa{}_{\mu\nu}
  \xrightarrow{u \to u - v}
    \ukappa{}_{\mu\nu}
    + \left((1 + 2 c_2) \uator{}_{[\mu} - \partial_{[\mu}\right)
      \left(v_{\nu]} + \tau_{\nu]} \tfrac{v^2}{2} \right).
  \end{equation}

Given a timelike vector field $u^\mu$, the restricted Galilean boosts
define a preferred (2-dimensional) family of timelike vector fields
whose members are all related to $u^\mu$ by a restricted boost.
Therefore, while in this approach to define a (restricted)
boost-invariant connection there is no explicit extra structure,
preferred observers are implicitly introduced.

Other restrictions than $\ator \cdot v \ne 0$ could be considered.  An
example is $D^\mu v^\nu = 0$ which defines a family of observers related
by a `spatially constant' boost.  But it is not clear if this would
produce a more general connection than the connection~\eqref{eq:unique_A_con}.

\subsection{With a mass gauge field}
\label{app:con_Mass}

Another possibility to obtain a boost-invariant connection from $(\tau_\mu,
h^{\mu\nu}, u^\mu, \ukappa{}_{\mu\nu})$ and linear in the first derivatives of
$(\tau_\mu, h^{\mu\nu})$ is to introduce a \emph{(mass) gauge field} / Bargmann form
$M_\mu$ such that \cite{Jensen:2018, Hartong.Kiritsis.Obers:2015, Bekaert.Morand:2016, Festuccia.EtAl:2016, Hansen.Hartong.Obers:2020, von_Blanckenburg.Schwartz:2025}, \cite[Chapter~5]{Schwartz:2026}
\begin{equation}
  M_\mu \xrightarrow{u \to u - v} M_\mu + v_\mu + \tau_\mu \tfrac{v^2}{2}\;.
\end{equation}
Then, any connection constructed from the fields
\begin{equation*}
  \bar h_{\mu\nu} := \ub{}_{\mu\nu} - 2 M_{(\mu} \tau_{\nu)} \; , \\
  \bar u^\mu := u^\mu + M_\alpha h^{\mu\alpha} \; , \\
  \bar \ator_\mu := \Lie{\bar u} \tau_\mu \; , \\
    \bar\Theta_{\mu\nu} := \tfrac{1}{2}\Lie{\bar u} \bar h_{\mu\nu} \; , \\
    \bar\theta := h^{\alpha\beta} \bar\Theta_{\alpha\beta} \; ,
\end{equation*}
which are boost-invariant fields, will be boost-invariant.  In this
sense, the connection is constructed from $(\tau_\mu, h^{\mu\nu}, M_\mu, u^\mu,
\ukappa{}_{\mu\nu})$ and linear in the first derivatives of $(\tau_\mu, h^{\mu\nu},
M_\mu)$.

As an example of such a connection, refs.~\cite{Jensen:2018, Hartong.Kiritsis.Obers:2015, Bekaert.Morand:2016, Festuccia.EtAl:2016, Hansen.Hartong.Obers:2020}
consider the following torsionful compatible connection (note that the
convention used for unit timelike vector fields in refs.~\cite{Jensen:2018, Hartong.Kiritsis.Obers:2015, Festuccia.EtAl:2016, Hansen.Hartong.Obers:2020}
is sign-opposite to ours)
\begin{align}
  \label{eq:con_Mass}
  \bar\Gamma^\alpha_{\mu\nu}
    &:= h^{\alpha\sigma}\left( \partial_{(\mu} \bar h_{\nu)\sigma} - \tfrac{1}{2} \partial_\sigma \bar h_{\mu\nu} \right) + \bar u^\alpha \partial_{\mu} \tau_{\nu} \\
    &\:= \ucheckGamma{\bar u}^\alpha_{\mu\nu}
      + 2 \tau_{(\mu} \ukappa{\bar u}_{\nu)\sigma} h^{\alpha\sigma} + \bar u^\alpha \tau_{[\mu} \ator_{\nu]} \; ,
\end{align}
where
$\ukappa{\bar u}_{\mu\nu}
    = \tau_{[\mu} \ub{\bar u}_{\nu]\alpha}
        \left(\ator^\alpha + h^{\alpha\beta}\partial_\beta\right)
        \left(-2 M_\sigma \bar u^\sigma + M_\sigma M_\gamma h^{\sigma\gamma}\right)
$.

To get the most general connection in this approach, it is sufficient
to replace, in the connection~\eqref{eq:con_Gal_ator}, the fields $\ub{}_{\mu\nu}$, $u^\mu$ and
$\uator{}_\mu$ by, respectively, the boost-invariant fields $\ub{\bar
  u}_{\mu\nu} = \bar h_{\mu\nu} + \tau_\mu \tau_\nu \left(-2 M_\alpha \bar u^\alpha + M^2\right)$,
$\bar u^\mu$ and $\bar \ator_\mu$, and choosing the Coriolis field of
$\bar u^\mu$ to be boost-invariant.\footnote{One could also replace $\ub{}_{\mu\nu}$ by $\bar h_{\mu\nu}$.} (For simplicity, we do not
consider terms involving $\bar\Theta_{\mu\nu}$ or $\bar\theta$.) This leads to
\begin{align}
  \label{eq:con_Gal_ator_Mass}
  \Gamma^\alpha_{\mu\nu}
  &= \ucheckGamma{\bar u}^\alpha_{\mu\nu} + 2\tau_{(\mu} \ukappa{\bar u}_{\nu)}{}^\alpha
    \nonumber\\
  &\quad+ c_1 \ub{\bar u}^\alpha{}_{(\mu} \bar \ator_{\nu)}
    + c_2 \ub{\bar u}_{\mu\nu} \bar \ator^\alpha
    + c_3 \, \bar u^\alpha \tau_{(\mu} \bar \ator_{\nu)}
    + c_4 \ub{\bar u}^\alpha{}_{[\mu} \bar \ator_{\nu]}
    + c_5 \, \bar u^\alpha \tau_{[\mu} \bar \ator_{\nu]} \; .
\end{align}
This connection is directly boost-invariant, and therefore the
parameters $c_i$ are totally free.

The clear disadvantage of this approach is that it requires the
introduction of a new field, i.e., namely the mass gauge field $M_\mu$.
A direct consequence of the presence of this field is that a preferred
timelike vector field is defined, corresponding to $\bar u^{\mu}$.
Therefore this approach is equivalent to considering the connection~\eqref{eq:con_Gal_ator}
and saying that the vector field $u^\mu$ is a fundamental building block
of the theory, and that no boost invariance with respect to that
vector field should be required.  In other words, this approach
follows a stronger version of the approach of \Cref{app:restricted_boosts}, as the Galilean
boosts are restricted to be the trivial boost.

\begin{remark}
  We could construct a connection of the form \eqref{eq:unique_A_con} with a gauge field
  by choosing $\ukappa{\bar u}_{\mu\nu} = 0$, $c_2 = -\tfrac{1}{2}$ and
  $c_1 = c_3 = c_4 = c_5 = 0$ in the connection~\eqref{eq:con_Gal_ator_Mass}.  In this case,
  we have $\ukappa{u}_{\mu\nu} = -\partial_{[\mu} M_{\nu]}$ in the connection~\eqref{eq:unique_A_con}.
\end{remark}

\begin{remark}
  Even if the Coriolis field is fully constructed from the gauge
  field, as for the connection~\eqref{eq:con_Mass}, and therefore the connection is
  constructed only from $(\tau_\mu, h^{\mu\nu}, u^\mu, M_\mu)$, there is always a
  preferred observer that can be constructed from $M_\mu$.  On the
  contrary, if one considers only $(\tau_\mu, h^{\mu\nu}, u^\mu,
  \ukappa{}_{\mu\nu})$, as we do in the present paper, no preferred
  observer can be constructed from $\ukappa{}_{\mu\nu}$ in general.  This
  is why $(\tau_\mu, h^{\mu\nu}, u^\mu, \ukappa{}_{\mu\nu})$ carries less structure
  than $(\tau_\mu, h^{\mu\nu}, u^\mu, M_\mu)$.  The use of the Coriolis field is
  further supported by the classification of affine connections with
  respect to Galilean structures \cite{Schwartz:2025}, which implies that the
  Coriolis field is one of the `building blocks' of an affine
  connection with respect to a Galilean structure.
\end{remark}

\subsubsection{Choice of connection with a mass gauge field}

While the gauge field has the disadvantage of introducing a preferred
timelike vector field, the freedom in the connection~\eqref{eq:con_Gal_ator_Mass} offers several
interesting possibilities for choosing the non-metricities and
torsion, which is not possible with the unique connection~\eqref{eq:unique_A_con}.  In
particular, there are two natural choices of connection that can be
considered:
\begin{enumerate}
\item \emph{A torsionful compatible connection}: this implies $c_2 = - c_1$, $c_3
  = 0$, $c_4 = c_1$, $c_5 = 1$, where $c_1$ is free.  In this case the
  connection is not unique.
\item \emph{A torsion-free conformally compatible connection}: this implies
  $c_1 = 2$, $c_2 = -1$, $c_3 = 1$, $c_4 = 0$, $c_5 = 0$, and the
  connection is uniquely given by
  \begin{subequations}
  \begin{equation}
    \Gamma^\alpha_{\mu\nu}
    = \ucheckGamma{\bar u}^\alpha_{\mu\nu} + 2\tau_{(\mu} \ukappa{\bar u}_{\nu)}{}^\alpha
      + 2 \ub{\bar u}^\alpha{}_{(\mu} \bar\ator_{\nu)}
      - \ub{\bar u}_{\mu\nu} \bar\ator^\alpha
      + \bar u^\alpha \tau_{(\mu} \bar\ator_{\nu)}
  \end{equation}
  or, equivalently, by
  \begin{equation}
    Q_{\mu\nu} = -\bar\ator_\mu \tau_\nu \; , \quad
    Q_\alpha{}^{\mu\nu} = 2\bar\ator_\alpha h^{\mu\nu} \; , \quad
    T^\alpha{}_{\mu\nu} = 0 \; .
  \end{equation}
  \end{subequations}
  Here the conformal 1-form $n_\mu$ in~\eqref{eq:conf_connection} corresponds to
  $-\bar\ator_\mu$.
\end{enumerate}

\subsection{With a preferred time function}

As discussed in \Cref{sec:Galilean_structures} and shown in ref.~\cite{Vigneron.Barzegar.Read:2025}, if the Galilean structure
results from the Galilean limit of a globally hyperbolic Lorentzian
structure, then there exists globally a scalar field $\Psi$ and a time
function $t$ such that $\tau = \e^{-\Psi} \dd t$, which the Frobenius
condition $\tau \wedge \dd \tau = 0$ alone only implies locally.  However, given
$\tau_\mu$, the field $\Psi$ and the time $t$ are not uniquely defined as can
be seen with the transformation $t \rightarrow g(t)$ leading to $\tau = \e^{-\Psi +
  \ln \dot g} \dd t$ and therefore $\Psi \rightarrow \Psi - \ln \dot g$.  This is in
contrast with the classical situation where $\tau_\mu$ is exact, and
therefore a preferred, or \emph{absolute}, time function $\bar t$ is defined
such that $\tau = \dd \bar t$.

In the approach of this section to construct a boost-invariant
connection, we reintroduce a preferred time function $\bar t$, and
related \emph{time dilation field} $\bar\Psi$, such that
\begin{align}
  \tau = \e^{-\bar\Psi} \dd \bar t \; .
\end{align}
Note that, even with the introduction of this `preferred' time
function, the physical time is still measured by $\tau_\mu$, and therefore
the spatial dependence of $\bar\Psi$ induces physical time dilation.  We
define the field $X_\mu := \partial_\mu \bar\Psi$ , with $\ator^\mu = h^{\mu\alpha} X_\alpha$.

\begin{remark}
  While this approach introduces a 1-form field $X_\mu$ as an extra
  structure, similarly to the gauge field approach with $M_\mu$, these
  two 1-forms differ in two ways: (i)~only the temporal part of $X_\mu$
  is independent of $\ator^\mu$ while $M_\mu$ is fully independent of
  $\ator^\mu$, and (ii)~$X_\mu$ is boost-invariant, while $M_\mu$ transforms
  as $M_\mu \xrightarrow{u \to u - v} M_\mu + v_\mu + \tau_\mu \tfrac{v^2}{2}$.
\end{remark}

Therefore, compared to the approach of \Cref{sec:unique_A_con} and the proof of \Cref{app:proof_unique_A_con} that
lead to the unique connection~\eqref{eq:unique_A_con}, on top of $\ator^\mu$, we have the
additional field $u^\mu X_\mu$ from which we can construct a connection.
Then, a connection constructed from $(\bar\Psi, \bar t, h^{\mu\nu}, u^\mu,
\ukappa{}_{\mu\nu})$ and linear in the first derivatives of $(\bar\Psi,
h^{\mu\nu})$ has the general form\footnote{We omit terms involving
  $\Lie{u}\ub{}_{\mu\nu}$ as they must vanish for a boost-invariant
  connection.  Indeed, the presence of $\bar\Psi$ does not change the
  derivation made in \Cref{app:proof_unique_A_con} for these terms, while it does for the terms
  involving $\ator^\mu$.}
\begin{align}
  \label{eq:con_Gal_ator_Psi}
  \Gamma^\alpha_{\mu\nu}
  &= \ucheckGamma{}^\alpha_{\mu\nu} + 2\tau_{(\mu} \ukappa{}_{\nu)}{}^\alpha \nonumber\\
  &\quad+ c_1 \ub{}^\alpha{}_{(\mu} \ator_{\nu)}
    + c_2 \ub{}_{\mu\nu} \ator^\alpha
    + c_3 \, u^\alpha \tau_{(\mu} \ator_{\nu)}
    + c_4 \ub{}^\alpha{}_{[\mu} \ator_{\nu]}
    + c_5 \, u^\alpha \tau_{[\mu} \ator_{\nu]} \nonumber\\
  &\quad+ \left[c_6 \ub{}^\alpha{}_{(\mu} \tau_{\nu)}
    + c_7 \, u^\alpha \ub{}_{\mu\nu}
    + c_8 \, u^\alpha \tau_\mu \tau_\nu
    + c_9 \ub{}^\alpha{}_{[\mu} \tau_{\nu]}\right] \, u^\sigma \partial_\sigma \bar\Psi \; ,
\end{align}
where the $c_i$ are constants.  The non-metricities and torsion of
this connection are
\begin{subequations}
\begin{align}
  Q_{\mu\nu}
  &= -(1 - c_5) \uator{}_{[\mu} \tau_{\nu]}
    - c_3 \uator{}_{(\mu} \tau_{\nu)}
    - (c_7 \ub{}_{\mu\nu} + c_8 \tau_\mu \tau_\nu) \, u^\sigma \partial_\sigma \bar\Psi \; ,
  \\
  Q_\alpha{}^{\mu\nu}
  &= (c_1 - c_4) \uator{}_\alpha \, h^{\mu\nu}
    + (-1 + c_3 + c_5) \, \tau_\alpha u^{(\mu} \ator^{\nu)}
    + (c_1 + 2 c_2 + c_4) \ub{}^{(\mu}{}_\alpha \ator^{\nu)} \nonumber\\
  &\quad + \left[(c_6 - c_9) \, \tau_\alpha h^{\mu\nu}
      + 2 c_7 \ub{}^{(\mu}{}_\alpha \, u^{\nu)} \right] u^\sigma \partial_\sigma \bar\Psi \; , \\
  T^\alpha{}_{\mu\nu}
  &= 2 c_4 \ub{}^\alpha{}_{[\mu} \uator{}_{\nu]}
    + 2 c_5 \, u^\alpha \tau_{[\mu} \uator{}_{\nu]}
    + 2 c_9 \, \delta^\alpha_{[\mu} \tau^{\vphantom{\alpha}}_{\nu]} \, u^\sigma \partial_\sigma \bar \Psi \; .
\end{align}
\end{subequations}

With a calculation similar to that performed in \Cref{app:proof_unique_A_con}, we can show that
the connection~\eqref{eq:con_Gal_ator_Psi} is boost-invariant if and only if
\begin{equation}
  c_3 = 1 + c_1 + 2 c_2 \; , \quad
  c_5 = c_4 \; , \quad
  c_6 = c_1 \; , \quad
  c_7 = 0 \; , \quad
  c_8 = 1 + c_1 + 2 c_2 \; , \quad
  c_9 = c_4 \; ,
\end{equation}
leading to
\begin{align}
  \Gamma^\alpha_{\mu\nu}
  &= \ucheckGamma{}^\alpha_{\mu\nu} + 2\tau_{(\mu} \ukappa{}_{\nu)}{}^\alpha \nonumber \\
  &\quad+ c_1 \ub{}^\alpha{}_{(\mu} X_{\nu)}
    + c_2 \ub{}_{\mu\nu} X^\alpha
    + (1 + c_1 + 2 c_2) \, u^\alpha \tau_{(\mu} X_{\nu)}
    + c_4 \, \delta^\alpha_{[\mu} X^{\vphantom{\alpha}}_{\nu]} \; ,
\end{align}
and
\begin{subequations}
\begin{align}
  Q_{\mu\nu}
  &= (1 - c_4) \, \tau_{[\mu} X_{\nu]}
    - (1 + c_1 + 2 c_2 ) X_{(\mu} \tau_{\nu)} \; , \\
  Q_\alpha{}^{\mu\nu}
  &= (c_1 - c_4) \, X_\alpha h^{\mu\nu}
    + (c_1 + 2 c_2 + c_4) \delta^{(\mu}_\alpha X^{\nu)}_{\vphantom{\alpha}} \; , \\
  T^\alpha{}_{\mu\nu}
  &= 2 c_4 \, \delta^\alpha_{[\mu} X^{\vphantom{\alpha}}_{\nu]} \; .
\end{align}
\end{subequations}
The non-metricities and torsion are boost-invariant since $X_\mu$ is
boost-invariant.

While we introduce a preferred time function from which the clock form
is defined, the boost-invariant connection we construct in the present
approach does not introduce a preferred timelike vector field, unlike
the restricted boost approach from \Cref{app:restricted_boosts} or the mass gauge field
approach from \Cref{app:con_Mass}.  Additionally, since a preferred time is also
defined in the classical case, for which $\tau = \dd t$, but still no
preferred observer, this makes the `preferred-time-function approach'
the most natural approach to construct a boost-invariant connection
after the one considered in \Cref{sec:unique_A_con}.

\begin{remark}
  To get a connection that is independent of the time $\bar t$, i.e.,
  independent of the transformation $\bar\Psi \rightarrow \bar\Psi + f(t)$, the only
  possibility is $c_6 = c_7 = c_8 = c_9$, leading again to the unique
  boost-invariant connection~\eqref{eq:unique_A_con}.
\end{remark}

\subsubsection{Choice of connection with a preferred time function}

As for the mass gauge field approach, the freedom in the connection
offers several possibilities for choosing the non-metricities and
torsion:
\begin{enumerate}
\item \emph{A torsionful compatible connection}: this implies $c_1 = 1$, $c_2 =
  -1$ and $c_4 = 1$, and the connection is uniquely given by
  \begin{subequations}
  \begin{equation}
    \Gamma^\alpha_{\mu\nu}
    = \ucheckGamma{}^\alpha_{\mu\nu} + 2\tau_{(\mu} \ukappa{}_{\nu)}{}^\alpha
      + \ub{}^\alpha{}_{(\mu} X_{\nu)}
      - \ub{}_{\mu\nu} X^\alpha
      + \ub{}^\alpha{}_{[\mu} X_{\nu]}
      + \, u^\alpha \tau_{[\mu} X_{\nu]}
  \end{equation}
  or, equivalently, by
  \begin{equation}
    Q_{\mu\nu} = 0 \; , \quad
    Q_\alpha{}^{\mu\nu} = 0 \; , \quad
    T^\alpha{}_{\mu\nu} = 2 \delta^\alpha_{[\mu} X^{\vphantom{\alpha}}_{\nu]} \; .
  \end{equation}
  \end{subequations}
\item \emph{A torsion-free conformally compatible connection}: this implies $c_1
  = 2$, $c_2 = -1$ and $c_4 = 0$, and the connection is uniquely
  given by
  \begin{subequations}
  \begin{equation}
    \Gamma^\alpha_{\mu\nu}
    = \ucheckGamma{u}^\alpha_{\mu\nu} + 2\tau_{(\mu} \ukappa{u}_{\nu)}{}^\alpha
      + 2 \ub{u}^\alpha{}_{(\mu} X_{\nu)}
      - \ub{u}_{\mu\nu} X^\alpha
      + u^\alpha \tau_{(\mu} X_{\nu)}
  \end{equation}
  or, equivalently, by
  \begin{equation}
    Q_{\mu\nu} = - X_\mu \tau_\nu \; , \quad
    Q_\alpha{}^{\mu\nu} = 2 X_\alpha h^{\mu\nu} \;, \quad
    T^\alpha{}_{\mu\nu} = 0 \; .
  \end{equation}
  \end{subequations}
  As for the mass gauge field approach, the conformal 1-form $n_\mu$
  in~\eqref{eq:conf_connection} corresponds to $-\bar\ator_\mu$.
\end{enumerate}

\printbibliography[heading=bibintoc]

\end{document}